\def\psfancypar#1#2{\begingroup\def\par{\endgraf\endgroup\lineskiplimit=0pt}
               \setbox2=\hbox{\large\sc #2}
               \newdimen\tmpht \tmpht \ht2 \advance\tmpht by \baselineskip
               \font\hhuge=Times-Bold at \tmpht
               \setbox1=\hbox{{\hhuge #1}}
               \count7=\tmpht \count8=\ht1
               \divide\count8 by 1000 \divide\count7 by \count8 
               \tmpht=.001\tmpht\multiply\tmpht by \count7 
               \font\hhuge=Times-Bold at \tmpht
               \setbox1=\hbox{{\hhuge #1}}
               \noindent
                \hangindent1.05\wd1
               \hangafter=-2 {\hskip-\hangindent
               \lower1\ht1\hbox{\raise1.0\ht2\copy1}%
                \kern-0\wd1}\copy2\lineskiplimit=-1000pt}
\def\Upsilonbf{\hbox{\boldmath$\Upsilon$\unboldmath}}
\newcommand{\Phibf}{\mbox{${\bf \Phi}$}}
\newcommand{\Psibf}{\mbox{${\bf \Psi}$}}
\newcommand{\etabf}{\mbox{${\bf\eta}$}}
\newcommand{\E}{\mbox{{\rm E}}}
\def\boxit#1{\vbox{\hrule\hbox{\vrule\kern3pt
        \vbox{\kern3pt#1\kern3pt}\kern3pt\vrule}\hrule}}
\def\reals{ { {\rm  I \kern-0.15em R }  } }
\def\complex{ {\,{{\rm C} \kern-0.50em \raise0.20ex {  |}}\, }}
\def\etabf{\hbox{\boldmath$\eta$\unboldmath}}
\def\mubf{\hbox{\boldmath$\mu$\unboldmath}}
\def\xibf{\hbox{\boldmath$\xi$\unboldmath}}
\def\Sigmabf{\hbox{$\bf \Sigma$}}
\def\Upsilonbf{\hbox{$\bf \Upsilon$}}
\def\Lambdabf{\mbox{$ \bf \Lambda $}}
\def\hbf{{\bf h}}
\def\nbf{{\bf n}}
\def\sbf{{\bf s}}
\def\vbf{{\bf v}}
\def\wbf{{\bf w}}
\def\xbf{{\bf x}}
\def\ybf{{\bf y}}
\def\xbf{{\bf x}}
\def\ybf{{\bf y}}
\def\Abf{{\bf A}}
\def\Bbf{{\bf B}}
\def\Dbf{{\bf D}}
\def\Hbf{{\bf H}}
\def\Ibf{{\bf I}}
\def\Mbf{{\bf M}}
\def\Pbf{{\bf P}}
\def\Rbf{{\bf R}}
\def\Ubf{{\bf U}}
\def\Vbf{{\bf V}}
\def\Wbf{{\bf W}}
\def\Xbf{{\bf X}}
\def\Pc{{\cal P}}
\def\be{\vskip .3cm \begin{equation}}
\def\ee{\end{equation} \vskip .4cm \noindent}
\def\defeq{{\stackrel{\Delta}{=}}}
\newcommand{\R}{\mbox{$\hat {\bf R}_{N}$}}
\newcommand{\Tr}{\mbox{Tr}}
\def\Rxx{\Rbf_{\ssstyle X\kern-.1em X}}
\let\ssstyle=\scriptscriptstyle
\def\Kout{\setbox1=\hbox{\Huge\bf K}\hbox to
1.05\wd1{\hspace{.05\wd1}
\def\Sout{\setbox1=\hbox{\Huge\bf S}\hbox to 1.05\wd1{\hspace{.05\wd1}

  \ifx\LabelFigloaded\MYundefined\relax
  \else
    \message{ !!! labelfig.tex ALREADY loaded !!!}
   \fi

  \def\LabelFigloaded{\relax}


  \chardef\LabelFigCatAt\the\catcode`\@
  \catcode`\@=11

 \let\LabelFigwlog@ld\wlog
 \def\wlog#1{\relax}

 \ifx\\\MYundefined@
    \let\\\relax
 \fi


  \def\ms@g{\immediate\write16}

 \def\N@wif{\csname newif\endcsname }
 \def\Temp@ {\N@wif\ifIN@}
 \ifx\INN@\MYundefined@
    \else \let\Temp@\relax
 \fi
 \Temp@

  \def\IN@{\expandafter\INN@\expandafter}
  \long\def\INN@0#1@#2@{\long\def\NI@##1#1##2##3\ENDNI@
    {\ifx\m@rker##2\IN@false\else\IN@true\fi}%
     \expandafter\NI@#2@@#1\m@rker\ENDNI@}
  \def\m@rker{\m@@rker}
 
  \newtoks\Initialtoks@  \newtoks\Terminaltoks@
  \def\SPLIT@{\expandafter\SPLITT@\expandafter}
  \def\SPLITT@0#1@#2@{\def\TTILPS@##1#1##2@{%
     \Initialtoks@{##1}\Terminaltoks@{##2}}\expandafter\TTILPS@#2@}

 \def\Shifted@@#1#2#3{\setbox0=\hbox{#3}%
   \raise -\dp0\vbox {\kern-#2%
       \hbox {\kern#1\unhbox0\kern-#1}%
           \kern#2}}

 \newcount\gridcount
 \newbox\auxGridbox@ \newbox\hGridbox@ \newbox\vGridbox@
 \newbox\Labelbox@ \newbox\auxLabelbox@
 \newbox\Coordinatebox@
 \newtoks\Labeltoks@
 \newdimen\Wdd@ \newdimen\Htt@
 \newdimen\Wddd@ \newdimen\Httt@
 
 \def\Wr@{\immediate\write16}

 \newdimen\GL@wd
 \GL@wd=.02pt
 \def\GridLineWidth#1{\GL@wd=#1}

 \def\gobble#1{}
 \def\EdgeErr@{\Wr@{}%
      \Wr@{\string\Edges\space argument
      1, 10, 100 or 1000 please\string!}%
      }

 \newcount\Edgect@

 \def\Sweepup#1\endSweepup{}

 \def\SetEdges@{%
    \edef\Zr@@s{\expandafter\gobble\number\Edgect@\empty}%
        \count255=0\Zr@@s\relax
        \ifnum\count255=\z@\else\EdgeErr@\show\tailtest\fi
        \count255=1\Zr@@s\relax
        \ifnum\count255=\Edgect@\relax\else\EdgeErr@\show\leadtest\fi
    \EdgGl@b\edef\Zr@s{\expandafter\gobble\Zr@@s\empty}
    \ifnum\Edgect@>\@ne\relax\EdgGl@b\let\L@Dc\empty
        \else\EdgGl@b\edef\L@Dc{\string.}\fi
    \ifnum\Edgect@>\@ne\relax
        \EdgGl@b\edef\Edgescale@##1{\divide##1 by \Edgect@}%
        \else\EdgGl@b\edef\Edgescale@##1{}\fi
    }

 \def\Edges#1{\Edgect@=#1\relax
     \let\EdgGl@b\global \SetEdges@}

 \Edges{1}

 \def\hhrule{\hrule height \GL@wd\vskip-.\GL@wd}

 \def\hRule@{%
   \advance\gridcount -2%
   \vfil\hhrule\vfil
   \llap{\smash{\raise -2.5pt
     \hbox{\L@Dc\number\gridcount\Zr@s\kern2pt}}}%
   \hhrule
   }

\def\vvrule{\vrule width \GL@wd \kern-\GL@wd}

 \def\vRule@{\advance\gridcount 2%
   \hfil\vvrule\hfil
   \setbox\auxGridbox@=\vbox to 0pt
      {\vskip \Htt@\vskip 2pt
        \hbox to 0pt{\hss\L@Dc\number\gridcount\Zr@s\hss}\vss}%
      \wd\auxGridbox@=0pt \box\auxGridbox@
   \vvrule
   }

 \def\PlaceGrid@@{\gridcount=10 
  \setbox\hGridbox@=\hbox{%
        \hbox{%
             \hskip-.4pt\vrule
             \vbox to \Htt@{%
               \offinterlineskip\parindent=\z@\relax
               \hbox to \Wdd@{\hfil}
               \hRule@\hRule@\hRule@\hRule@
               \vfil\hhrule\vfil}%
             \vrule\hskip-.4pt}
    }%
  \gridcount=0%
  \setbox\vGridbox@=\hbox{%
      \vbox{\offinterlineskip\parindent=0pt\hsize=0pt
         \vskip-.4pt\hrule%
         \hbox to \Wdd@{%
                 \vtop to \Htt@{\vfil}%
                 \vRule@\vRule@\vRule@\vRule@
                 \hfil\vvrule\hfil}%
         \hrule\vskip-.4pt}}%
  \wd\hGridbox@=0pt\ht\hGridbox@=0pt
  \wd\vGridbox@=0pt\ht\vGridbox@=0pt
  \hbox{\box\hGridbox@\box\vGridbox@}%
  }

 \def\LabelsGlobal{\def\LabGl@b{\global}}
 \def\LabelsLocal{\def\LabGl@b{}}
 \LabelsGlobal 

 \def\SetLabels#1\endSetLabels{%
   \LabGl@b\Labeltoks@={#1()\\}%
   }

 \LabGl@b\Labeltoks@={()\\}

 \def\ShowGrid{\LabGl@b\let\PlaceGrid@\PlaceGrid@@}
 \def\HideGrid{\LabGl@b\let\PlaceGrid@\relax}
 \def\Grids{\ShowGrid\LabGl@b\let\GridSwitch@\ShowGrid}
 \def\noGrids{\HideGrid\LabGl@b\let\GridSwitch@\HideGrid}

 \noGrids

 \def\bAdjust@@{%
     \setbox\auxLabelbox@=\hbox{\raise \dp\auxLabelbox@
            \box\auxLabelbox@}}
 \def\bAdjust@{\let\vAdjust@\bAdjust@@}

 \def\eAdjust@@{\dimen0=-.5\ht\auxLabelbox@
     \advance\dimen0 by .5\dp\auxLabelbox@
     \setbox\auxLabelbox@=
            \hbox{\raise\dimen0\box\auxLabelbox@}}
 \def\eAdjust@{\let\vAdjust@\eAdjust@@}

 \def\tAdjust@@{%
     \setbox\auxLabelbox@=\hbox{\raise-\ht\auxLabelbox@
            \box\auxLabelbox@}}
 \def\tAdjust@{\let\vAdjust@\tAdjust@@}

 \let\vAdjust@\relax

 \def\lAdjust@{\let\hAdjust@\rlap}
 \def\rAdjust@{\let\hAdjust@\llap}

 \let\hAdjust@\relax\let\vAdjust@\relax

 \def\FetchLabel@#1(#2)#3\\{%
     \IN@0#2@@\ifIN@
        \setbox0=\hbox{\ignorespaces#1#3\unskip}%
        \ifdim\wd0>0pt
           \ms@g{}%
           \ms@g{ !!! Bad label(s)? !!!}%
           \message{ #1(#2)#3}%
        \fi
        \def\LabelMole@##1\endFetchLabel@{%
            \IN@0()\\@##1@%
            \ifIN@\def\Temp@{\FetchLabel@##1\endFetchLabel@}%
            \else\def\Temp@{}%
            \fi
            \Temp@
           }%
     \else
       \ignorespaces#1\unskip
       \setbox\auxLabelbox@=%
         \hbox to 0pt{\hss\ignorespaces\hAdjust@
          {\ignorespaces#3\unskip}\hss}%
       \vAdjust@
       \let\hAdjust@\relax\let\vAdjust@\relax
       \AugmentLabelBox@@{#2}%
       \ht\Labelbox@=0pt\dp\Labelbox@=0pt
       \let\LabelMole@\FetchLabel@%
     \fi\LabelMole@}

 \newtoks\XYSep@ 
 \def\SetXYSeparator#1{%
     \IN@0#1@@\ifIN@\XYSep@{*}%
     \else
     \XYSep@{#1}%
     \fi
     }

 \SetXYSeparator*

 \def\AugmentLabelBox@@#1{%
     \IN@0\the\XYSep@ @#1@\ifIN@
       \SPLIT@0\the\XYSep@ @#1@%
       \setbox\Labelbox@=\hbox to 0pt{%
         \unhbox\Labelbox@
         \Shifted@@{\the\Initialtoks@\Wddd@}%
         {\the\Terminaltoks@\Httt@}%
         {\box\auxLabelbox@}}%
     \else
         \ms@g{}%
         \ms@g{ !!! Bad insertion point. !!!}%
         \message{ (#1\ this point was rejected.)}%
     \fi
    }

 \def\FetchOption@#1[#2]#3\endFetchOption@{%
    \def\temp{#1}
    \ifx\temp\empty
       \Edgect@=#2\relax
       \let\EdgGl@b\relax
       \SetEdges@
       \Cleaner@#3%
    \fi}

 \def\Cleaner@#1[@]{\Labeltoks@{#1}}
     
 \def\PlaceLabels@@{\mathsurround=0pt
     \def\Cr@{\\}%
     \let\L\lAdjust@\let\R\rAdjust@
     \let\B\bAdjust@\let\E\eAdjust@\let\T\tAdjust@
     \expandafter\FetchOption@\the\Labeltoks@[@]\endFetchOption@
     \Wddd@=\Wdd@ \Edgescale@\Wddd@ 
     \Httt@=\Htt@ \Edgescale@\Httt@
     \expandafter\FetchLabel@\the\Labeltoks@\endFetchLabel@
     \box\Labelbox@
     }%

 \let \PlaceLabels@\PlaceLabels@@

 \def\AffixLabels#1{\setbox\Coordinatebox@=\hbox{#1}%
      \Wdd@=\wd\Coordinatebox@ \Htt@=\ht\Coordinatebox@
      \advance\Htt@ \dp\Coordinatebox@
      \hbox{\copy\Coordinatebox@\kern-\Wdd@ 
           \Shifted@@{0pt}{-\dp\Coordinatebox@}%
           {\PlaceLabels@\PlaceGrid@}%
           \kern\Wdd@}%
      \GridSwitch@ 
      \LabGl@b\Labeltoks@{()\\}%
      }
 
   \let\wlog\LabelFigwlog@ld   
   \catcode`\@=\LabelFigCatAt  


 
                                By

              Raymond S\'eroul <A18645@FRCCSC21.BITNET>
                                and 
              Laurent Siebenmann <lcs@topo.math.u-psud.fr>
    
              VERSIONS: July 1991, Oct 1991, Jan 1992, July 1992

INTRODUCTION

      This labelling package is intended for TeX users who
rely on non-TeX sources for for their graphics inserts.  It
provides means for adding TeX labels to such inserts with a
minimum of fuss. 

       For most labels, TeX users have in the past found it
reasonably convenient to rely on non-TeX sources. Typical
occasions when an inescapable need for TeX labels seemed to
arise are

 (a) when the graphics program lacks certain exotic or complex
mathematical symbols

 (b) when the very highest typographical quality is wanted for the
labels

 (c) when labels included with the graphics fail to print, 
 and you cannot figure out why (cf. boxedeps.doc).  The labels
 provided by labelfig.tex are 100

       Since this package first appeared, many users, who in the
past scarcely dreamed of using TeX labels, have come to use
nothing but.  So it is now appropriate to add

Intoxication Warning:  TeX labels may be addictive and expensive. 

     If you have a fast preview you may disagree, and even find
that this package provides an agreeable paste-up environment; see
extra applications at end.

     Note to publishers: It is possible and convenient to ultimately
export the TeX labels produced by labelfig.tex to become an integral
part of the EPS file. This is often desired by a publisher who typically
uses an "upmarket" graphics or page layout program, with which the
staff is skilled in perfecting figures.  See Appendix I for
a recipe.

     The authors are grateful to Patrick Ion of Math Reviews for
helpful comments and encouragement.

BASIC INSTRUCTIONS

    After reading in the macro file using

preview or proof your figure with a coordinate grid printed on
top, by typing the following:

    \ShowGrid  
    \AffixLabels{<the graphics insertion>}

Here <the graphics insertion> is what you would type to insert
the graphics object alone without the grid.  This must provide
for the space around it. For example <the graphics insertion>
might well be \BoxedEPSF{MyFigure scaled 700} using the
boxedeps.tex macro package (from same source); this provides a
TeX box containing the encapsulated PostScript insert specified by
the file MyFigure. \AffixLabels{...} provides the grid (supposing
\ShowGrid is present) and later, once you have specified labels
using the grid, it will "tack on" the labels.

     The grid is a sort of (usually elongated) checkerboard of
ten rows and ten columns and its (internal) partitions are by
default numbered  .1, ... ,.9  both horizontally (X-coordinate
running left to right) and vertically (Y-coordinate running bottom
to top).  Thus the points enclosed by the grid correspond to the
points of the unit square in the cartesian "X-Y" plane, the lower
left corner corresponding to the origin (0,0).  By extrapolation,
the full page corresponds to a larger rectangle in the plane.

     These coordinates serve to position labels as follows.
Before the \AffixLabels{...} command type label specifications:

  \SetLabels
   (<X-coordinate>*<Y-coordinate>) <first label> \\
   .
   .
   .
   (<X-coordinate>*<Y-coordinate>)  <last label> \\
  \endSetLabels

Each row specifies one label and is terminated by \\.  In each
row, the position indicator comes first; it is written as a
standard cartesian point except that the X- and Y- coordinates
are separated by * rather than a comma because TeX allows a
comma as decimal point. There are no dimension units to specify
as the unit is the grid itself.

     By default, this cartesian point specifies where the middle
of the baseline of the label will be located.  However if you precede
the point by \L [or \R] the left [or right] edge of the baseline will
be located there. Similarly you may also precede the point by \T, \E,
or \B to vertically align the top equator or bottom of the label box
at the specified point.  This gives nine standard positions of
the label with respect to the insertion point --- corresponding to
the eight principle points of the compas and the center

                     \L\T     \T      \R\T

                     \L\E     \E      \R\E

                     \L\B     \B      \R\B

But this neglects the default "baseline" level of TeX,
giving potentially three more positions

                     \L    <no tag>   \R

For text, the baseline level is often the preferred. Its relation to
the others is variable. It will often coincide with the bottom level,
as happens for "X".  But it is often distinct, as for "g", in which
case you have in all 12 distinct positions rather than 9.

     It is convenient to think of this specification of label
position as attaching the label by a thumb-tack to the coordinate
grid. There are up to twelve positions of the thumb-tack on the
label, while the position of the thumb-tack on the coordinate grid is
arbitrary.  Normally, one choses the position of the thumb-tack on
the label to be the one that is the closest to the item being
labeled.  There are good reasons for this "rule of thumb":

   (a)  It facilitates correct positioning at first try.

   (b)  If the scale of the figure must be altered after labels
have been affixed, the labels have a good chance of remaining well
positioned.

   (c)  The visible grid need not extend beyond the "bounding box"
for the figure, because the best preferred position is always
(at least almost) within the bounding box .

The second reason is particularly important. Indeed it often
happens that scale has to be altered after labelling begins, in
order to either provide space for the labels, or to adjust
proportions between the labels and the figure.  (The size of labels
is unaffected by scaling.)

     Here is an artificial but self-contained test which uses
TeX rules to make a graphics object.

TEST

    Do not skip this!



 \def\FrameIt#1{\hbox{\vrule$\vcenter {\hrule\kern3pt%
             \hbox {\kern3pt #1\kern3pt}%
               \kern3pt\hrule}$\relax\vrule}}

 \def\Caption#1#2{\FrameIt{%
       \vtop {\hsize=#1\relax \parindent=0pt
         \leftskip=0pt \rightskip=0pt plus15pt
         \parfillskip=0pt
         \lineskip=1pt\baselineskip=0pt
         #2}}}

 \def\FirstQuadrant{\hbox to 100pt{\vrule\vbox to 100pt{%
        \hbox to 100pt{\hfil}\vfil\hrule}\hss}}


  \SetLabels
    \R(.5*.2) $\zeta\,\cdot$\\
    (.9*-.10) $\xi$\\
    \R(-.03*.9) $\eta$\\
    \T(.5*.9) \Caption{70pt}{%
          \it The norm of
          $g(\xi+i\eta)$ is indicated on
          contours of this invisible surface.}\\
  \endSetLabels

  \AffixLabels{\FirstQuadrant}

  \end

  Note that the coordinates to use for labels are indicated on the
edges of the grid (when visible) corresponding to the conventional
x- and y- axes of the Cartesian plane. By default the grid is
1-by-1. However, by the command \Edges{100}, you can change this
to 100-by-100 and many users find this alternative most
convenient. Place the command \Edges{...} in your style file (or
header) since its effect is is global. Other possible edge values
are 10 and 1000.

  If you use the command \Edges{...} at all, do so with care.  For
if you accidentally delete an \Edges{...} command your labels will
abruptly be badly misplaced and may logically but mysteriously
generate "dimension too big" errors under TeX and "off page" errors
under your driver.  

  You can dictate the edgescale for an individual figure by giving
the scale in brackets immediately after \SetLabels.  Thus, to
import into an article using say \Edge{100} a figure labelled using
another edgescale, say the original 1-by-1 default, you can use
\SetLabels[1]...\endSetLabels.


GETTING IT DOWN PAT

     Complicated labeling deserves the same respect as
complicated mathematics.  Do not expect it to come out perfect the
first time!  What is needed in either case is a mechanism to
repeatedly typeset troublesome pieces.

     One mechanism is always available.  One does complicated
labelling in a separate "test" file involving just the figure being
labelled;  a texpert will know how to \dump TeX's current state as
a temporary format that restarts rapidly at each retry.  Usually,
one then pastes the completed labelled figure back into the main
TeX file, but, of course, one can also \input it as an auxiliary
file.

     If you do not have a TeXpert at handy, here is a first
approximation to an efficient setup. By deletions reduce a copy
of your article to just a few lines before and after the figure.
Now label the figure, and finally, copy and paste the labelled
figure to the original article. Then copy the next figure to label
into this testbed and repeat. The TeXpert can improve the  speed
at which TeX starts up, by compiling a format specifically for
your article; just one caution: best NOT include in the format
ephemeral details of setup like \Set<mydriver>ArtSpecials (from
boxedeps.tex because this reads  figure dimensions which you may
change during your work session.

     An improved mechanism to repeatedly typeset troublesome
pieces is now available on the Macintosh; it is called LinoTeX;
see the same ftp sources.  It could be set up on many types
of computer.

     Before using labelfig.tex to attach labels to a graphics
object inserted using boxedeps.tex or BoxedArt.tex, make it a
firm rule to carefully adjust the bounding box using the trimming
commands of these packages, and also at least tentatively scale
and position the object. Beware of changing the grid inadvertently
after the labels have been positioned.  For example, correcting
the bounding box of a PostScript graphics object can foul up the
labels by changing the coordinate grid to which the labels are
attached. This is particularly true for the trimming  commands of
boxedeps.tex and BoxedArt.tex. However, as noted already, change
of scale is much less disruptive, and modest adjustments should be
well tolerated.

     Sometimes the labels protrude so far from the bounding box
of a figure that the figure has to be repositioned.  Best do this
by ad hoc spacing, say using \hglue and \vglue; altering the
bounding box would create a vicious circle.

     Remember that you are responsible for preventing labels
from overlapping. You are responsible for all label typography
including size and style. A label is really just about anything
that can be put in a TeX box. Note that spaces at the beginning
and end of labels will normally be suppressed; if you really want
them you must protect them with TeX braces.

     This package temporarily sets the \mathsurround parameter
of TeX to zero  while the labels are being affixed. This is done
because nonzero \mathsurround space would influence the position
of left and right aligned labels; then, when a texpert or printer
modifies mathsurround, diagram labeling might be disastrously
altered. There is a small price to pay involving labels that are
formatted as caption boxes including mathematics: you  may want or
need to specify an explicit mathsurround space within the caption
box; it will not influence anything outside.

     Those hostile to the use of * as separator between
the X and Y coordinates of label insertion points, are free to
impose another using \SetXYSeparator{<the new separator>}.  
Americans may prefer "," to "*" since they never use a 
comma as a decimal point; on the other hand, * may be more visible.

APPENDIX (I)  MERGING labelfig.tex LABELS INTO AN EPSF GRAPHICS OBJECT.

     As promised in the introduction, here is a recipe useful for
publishers. It works at least on Macintosh and at least for vectorized
graphics and Adobe type1 fonts.  (There is surely a similar recipe for
PCs under MSWindows.)

 (a)  Use boxedeps.tex utility to integrate the figure given by the eps
file, "x.eps" say, with a visible frame around it.  See
\ShowDisplacementBoxes command in boxedeps.tex.  To get precise results
automatically it is important to use the \Trim... commands of
boxedeps.tex making the "DisplacementBox" neatly fit the figure.

 (b)  Use the TeX printer driver and LaserWriter (versions >= 8.1.1) to
export to an EPSF the DVI page containing the integrated, labelled
figure. You now have an EPS file  "xx.eps"  that contains too much, and at
the wrong scale, and at wrong position.

 (c)  Convert the EPSF to an Adode Illustrator format EPSF using
the shareware utility called epsConvert by Sam Weiss
1993-- (currently $25).

 (d)  In Illustrator (or a compatible program), group the labels and the
"DisplacementBox"; copy them to the clipboard and paste them into "x.ps".
This step requires that all the label fonts be "visible to the Macintosh.

 (e)  Translate and scale the pasted group consisting of the labels plus
the "DisplacementBox" so as to make the "DisplacementBox" the bounding
box of (labelless) figure represented by "x.eps".  At this point the
labels will be correctly placed on the figure "x.eps".

 (f)  Ungroup and delete the "DisplacementBox".  The result is the
desired single EPS file, "x+.eps" say, It contains the original figure
plus its labels.  

     Using grouping and ungrouping appropriately in "x+.eps", a
publisher's staff can very efficiently improve label positions etc.

APPENDIX II)  SOME EXOTIC APPLICATIONS

     The grid of labelfig.tex is analogous to a light-table in
classical page makeup with wax or latex glue.  In principle, you
can use it to compose any page from its indivisible parts.  This
even has some of the artisanal charm of classical paste-up
provided you have a fast screen preview to make the process
"interactive".

     In practice labelfig.tex is a tool for nonstandard jobs.
Here are a few going beyond the labelling already discussed.

(I)  GRAPHICS INTEGRATION.

     This is accomplished by treating the imported graphics
objects as labels.  The underlying graphics object is then
typically an empty  \vbox to <dimension>{\vfill} in a TeX
\midinsert...\endinsert construction.  A label line
might be of the form

   (.1*.1) \special{... MyFigure ...}\\

The exact form of the special command varies from driver to
driver.  However, in the case of encapsulated PostScript graphics
(EPSF norm), by relying on boxedeps.tex, one can have the
following standard syntax (independant of driver  (see
boxedeps.doc for details.
  
  (.1*.1) \BoxedEPSF{MyFigure scaled <scale in mils>}\\

This may be slow since it requires TeX to read the PostScript
file to read bounding box using many complex macros.  So you
may want to try

  (.1*.1) \EPSFSpecial{MyFigure}{<scale in mils>}\\

which is fast and driver independant, but it squashes the
bounding box, normally to its lower left corner.

     Similarly for graphics of the Macintosh PICT norm ---
using BoxedArt.tex (same sources) in place of boxedeps.tex.

     This approach to integration is to be recommended when
one is assembling a composite graphics object.

 (II)  COMMUTATIVE DIAGRAM ENHANCEMENT

     Commutative diagrams or arrays of mathematical objects
connected by arrows of various sorts are common in mathematics.
The mathematical objects require the use of TeX.  Recently TeX
acquired a good collection of arrows of all slopes --- that of
LamSTeX --- plus pwerful macros to build the diagrams.

     However, even the LamSTeX collection is often
inadequate; it lacks for example double shafted arrows, dotted
arrows and curved arrows. Fortunately it is possible to produce
such arrows on an individual basis using sophisticated graphics
programs such as Illustrator and AldusFreehand (both serving
the EPSF norm) or using Metafont (with its public domain norm).
Since the creation of each new arrow is a work of love, you
probably want to limit the number of arrows by using LamSTeX
for most arrows. The 40K commutative diagram module of LamSTeX
has been adapted to work with AmSTeX and a copy may be posted
with LabelFig and related files. Unfortunately no one has yet
offered a version that works with Plain TeX or LaTeX.

       Suffice it here to say that when the exotic arrow has
been somehow imported into TeX, labelfig.tex treats it as a
label that one affixes to the commutative diagram.  Two other
steps will be treated in separate notes, namely the matter of
extracting the dimension specifications for the arrow and the
construction of the arrow --- for these steps are far from
unique and often depend intimately on your computer environment. 
Notes for the Macintosh-Textures-Illustrator combination are
found in the file ExoticArrows.doc.

 (III) NESTING 

Ingenuity pays off in exploiting labelfig.tex. One can
mix graphics and typography quite freely.  labelfig.tex is good
for freeform or overlapping arrangements, while boxedeps.tex (or
BoxedArt.tex) is best for regimented non-overlapping
arrangements --- and the two can be combined.

     The default behavior of labelfig.tex is not ideal 
for nesting objects, because to prevent trouble for beginners
the register for labels is globally cleared when \AffixLabels
concludes.  But there are switches available

      \LabelsGlobal      \LabelsLocal

which change this.  To understand this, extend the above test 
by something like:


 \LabelsLocal

 \SetLabels
    (.5*.5) AAA\\
 \endSetLabels

 {
 \SetLabels
    (.5*.5) ZZZ\\
 \endSetLabels
   \AffixLabels{\FirstQuadrant}
 }

   \AffixLabels{\FirstQuadrant}


     There are however potential pitfalls.  Neither
labelfig.tex nor boxedeps.tex has been tested under extreme
conditions. Problems may occur if their procedures are
indiscriminately nested. For boxedeps.tex (not labelfig.tex)
there is a precise cause for worry, namely many of its
variables are "global", which means that TeX braces will not
provide the protection one might expect.

COMMAND SUMMARY FOR labelfig.tex

  Here [...] means optional (one or zero)
       [...]* means any number of such constructs

  \SetLabels
    [[<P>](<X><Sep><Y>) <label> \\]*
  \endSetLabels
  \ShowGrid  
  \AffixLabels{<the figure>}

   --- <P> is tack position, one of eleven or empty
              order irrelevant

                   \L\T      \T      \R\T

                   \L\E      \E      \R\E

                     \L               \R

                   \L\B      \B      \R\B

   --- (<X><Sep><Y>) insertion point;
  <Sep> is separator, = * by default;
  \SetXYSeparator{<Sep>} changes it.
   <X> and <Y> are real numbers

  --- <label> a label to attach 

  --- <the figure> the figure to label 

  \GlobalLabels (default)     
  \LocalLabels  setting for nested constructs.

 \Grids makes ALL grids appear; \HideGrid then makes just next disappear.
 \noGrids returns to default.  The commands are always global.

 \GridLineWidth{<dimension>} adjusts width of grid lines. Default is very
small, to give "hairline" effect. If your grid lines are missing try
setting \GridLineWidth{1pt}.

 \Edges#1 globally changes the edge size of all grids to the numerical 
value #1, which must be 1, 10, 100, or 1000.  The default is 1.

VERSION HISTORY.
 --- Jan 1993: \Edges#1 and [??] option after \SetLabels
 --- July 1992: \Grids, \noGrids, \HideGrid;
       Gridlines become hairlines; \GridLineWidth{<dimension>}.
 --- Oct 1991, Jan 1992: \SetXYSeparator{<Sep>},  \LabelsGlobal,
       \LabelsLocal.
 --- July 1991: first release

Address for bugs and other feedback:

        Raymond S\'eroul
        IREM and Lab. de Typographie Informatise
        Univ. Rene Descartes
        Strasbourg

    Tel 33-88-41-63-45
    Email:  A18645@FRCCSC21.BITNET

        Laurent Siebenmann
        Mathematique, Bat. 425,
        Univ de Paris-Sud,
        91405-Orsay,
        France

    Tel 33-1-6941-7949; 
    Email: lcs@topo.math.u-psud.fr

\def\scalefig#1{\epsfxsize #1\textwidth}
\def\defeq{\stackrel{\Delta}{=}}

\def\tcr{\textcolor{red}}

\def\wt{\widetilde}

\def\Phibf{\boldsymbol{\Phi}}
\def\xibf{\boldsymbol{\xi}}
\def\Psibf{\boldsymbol{\Psi}}
\def\Upsilonbf{\boldsymbol{\Upsilon}}
\def\etabf{\boldsymbol{\eta}}

\def\tcr{\textcolor{red}}

\def\bf{\textbf}
\def\ovl{\overline}
\def\wt{\widetilde}

\def\Lambdabf{\boldsymbol{\Lambda}}

\def\Sigmabf{\boldsymbol{\Sigma}}
\def\Tr{\text{Tr}}
\def\diag{\text{diag}}

\newcommand {\Ebb}{{\mathbb{E}}}

\newcommand{\bzero}{\bf{0}}

\newtheorem{theorem}{Theorem}

\newtheorem{lemma}{Lemma}

\setcounter{footnote}{1}

\normalsize


\begin{document}

\title{ \LARGE Two-Stage Beamformer Design for Massive MIMO Downlink By Trace Quotient Formulation}

\author{  Donggun Kim, Gilwon Lee, {\em Student~Members, IEEE}, and Youngchul
Sung$^\dagger$\thanks{$^\dagger$Corresponding author}, {\em
Senior~Member, IEEE}  \\
\thanks{The authors are with Dept. of Electrical Engineering,  KAIST, Daejeon 305-701, South
Korea. E-mail:\{dg.kim@, gwlee@, ysung@ee.\}kaist.ac.kr.
{This research was supported in part by Basic Science Research Program through the National Research Foundation of Korea (NRF) funded by the Ministry of Education (2013R1A1A2A10060852). }}
}

\markboth{Submitted to IEEE Transactions on Communications, \today}%
{}

\maketitle

\begin{abstract}
In this paper, the problem of outer beamformer design  based only
on channel statistic information is considered for two-stage
beamforming for multi-user massive MIMO downlink, and the problem
is approached based on signal-to-leakage-plus-noise ratio (SLNR).
To eliminate the dependence on the instantaneous channel state
information, a lower bound on the average SLNR is derived by
assuming zero-forcing (ZF) inner beamforming, and an outer
beamformer design method that maximizes the lower bound on the
average SLNR is proposed. It is shown that the proposed SLNR-based
outer beamformer design problem reduces to a trace quotient
problem (TQP), which is often encountered in the field of machine
learning. An iterative algorithm is presented to obtain an optimal
solution to the proposed TQP. The proposed method has the
capability of optimally controlling the weighting factor between
the signal power to the desired user and the interference leakage
power to undesired users according to different channel
statistics. Numerical results show that the proposed outer
beamformer design method yields significant performance gain over
existing methods.
\end{abstract}

\begin{keywords}
Massive MIMO systems, two-stage beamforming,
signal-to-leakage-plus-noise ratio (SLNR), trace quotient problem
(TQP), adaptive weighting factor
\end{keywords}

\section{Introduction}

The multiple-input multiple-output (MIMO) technology has prevailed
in wireless communications for more than a decade. The technology
has been adopted in many wireless standards since it improves the
spectral efficiency and reliability of wireless communication
without requiring additional bandwidth. Recently, the MIMO
technology based on large-scale antenna arrays at base stations,
so-called massive MIMO, is considered to further improve the
system performance for upcoming 5G wireless systems and vigorous
research is going on on this topic. Massive MIMO can support high
data rates and energy efficiency and simplify receiver processing
based on the asymptotic orthogonality among user channels based on
large antenna arrays
\cite{Marzetta:10WCOM,Rusek&Persson&Lau&Larsson&Edfors&Tufvesson&Marzetta:13SPM}.
However, realizing the  benefits of massive MIMO in practical
systems faces several challenges especially in widely-used
frequency division duplexing (FDD) scenarios. In contrast to
current small-scale MIMO systems,  downlink channel estimation is
a difficult problem for FDD massive MIMO systems since the number
of available training symbols required for downlink channel
estimation is limited by the channel coherence time and the number
of channel parameters to estimate is very large
\cite{Noh&Zoltowski&Sung&Love:14JSTSP,
Choi&Love&Bidigare:14JSTSP,So&Kim&Lee&Sung:15SPL,Noh&Zoltowski&Love:14Arxiv}.
Furthermore, channel state information (CSI) feedback overhead for
downlink user scheduling for massive FDD multi-user MIMO can be
overwhelming without some smart structure on massive MIMO systems.
To overcome these difficulties associated with massive MIMO,  {\em
two-stage beamforming} for massive MIMO under the name of ``Joint Spatial Division and Multiplexing (JSDM)'' has been studied in
\cite{Adhikary&Nam&Ahn&Caire:13IT,
Nam&Adhikary&Ahn&Caire:14JSTSP,Lee&Sung:14ITsub,Chen&Lau:14JSAC,
Liu&Lau:14SP}.  The two-stage beamforming idea is basically a
divide-and-conquer approach, and the key ideas of the two-stage
beamforming strategy are 1) to partition the user population
supported by the serving base station into multiple groups each
with approximately the same channel covariance matrix (this can be
viewed as virtual sectorization) and 2) to decompose the MIMO
beamformer at the base station into two steps: an {\em outer
beamformer} and an {\em inner beamformer}, as shown in Fig.
\ref{fig:two_stage_bf}. The outer beamformer faces the antenna
array and roughly distinguishes different groups by bolstering
in-group transmit power and suppressing inter-group interference,
and the inner beamformer views the product of the actual channel
and the outer beamformer as an effective channel, separates the
users within a group, and provides spatial multiplexing among
in-group users \cite{Adhikary&Nam&Ahn&Caire:13IT}. Here, major
complexity reduction results from the approach that {\em the outer
beamformer is properly designed based only on channel statistic
information not on CSI}. In this case, the actually required CSI
for the inner beamformer adopting typical zero-forcing (ZF) or
regularized ZF (RZF)  beamforming is significantly reduced since
it only requires the CSI of the effective channel with
significantly reduced dimensions.

Several researchers followed the aforementioned framework for
two-stage beamforming for massive MIMO. They  adopted linear
beamforming such as ZF for the inner beamformer and tackled the
problem of outer beamformer design based on channel statistic
information \cite{Adhikary&Nam&Ahn&Caire:13IT, Chen&Lau:14JSAC,
Liu&Lau:14SP}. In \cite{Adhikary&Nam&Ahn&Caire:13IT}, Adhikary
{\it et al.} proposed a simple block diagonalization (BD)
algorithm for the outer beamformer design, which  obtains the
outer beamformer by  projecting the dominant eigenvectors of the
desired group channel covariance matrix onto the null space of the
dominant eigenspace of all other group channel covariance
matrices. In \cite{Chen&Lau:14JSAC}, Chen and Lau considered the
outer beamformer design criterion of minimizing the total
inter-group interference power minus the weighted total desired
group signal power. In this case,  for a given weighting value
between the total inter-group interference power and the total
in-group signal power, the outer beamformer is given by a set of
dominant eigenvectors of the weighted difference  between the
total undesired group channel covariance matrix sum and the
desired group channel covariance matrix. In \cite{Liu&Lau:14SP},
Liu and Lau considered the outer beamformer design from a fairness
perspective.  In this work, they designed the outer beamformer by
choosing a set of columns from a discrete Fourier transform (DFT)
matrix to maximize the minimum average rate among all the users.

In this paper, we also consider the outer beamformer design based
only on channel statistic information for the aforementioned
two-stage beamforming framework for massive MIMO. As already shown
in the previous works, computation of the
signal-to-interference-plus-noise-ratio (SINR) for each receiver
is difficult in this downlink scenario with interfering groups. To
circumvent this difficulty, as our design criterion we adopt the
average signal-to-leakage-plus-noise ratio (SLNR) criterion
\cite{Sadek&Tarighat&Sayed:07WCOM}, which is shown to be
Pareto-optimal in the achievable rate region in certain
interference channel cases \cite{Zakhour&Gesbert:09WSA,
Park&Lee&Sung&Yukawa:13SP}, and propose an average SLNR-based
outer beamformer design framework in single cell massive MIMO
systems.\footnote{The multi-cell scenario considered in
\cite{Chen&Lau:14JSAC} can be cast into this single-cell
multi-group setting simply by considering each base station in the
multi-cell case as one group in the single-cell multi-group case.}
The signal power to the desired receiver and the leakage power to
other undesired receivers by the transmitter required for the SLNR
method cannot be computed by considering only the outer
beamformer. Instead, both the outer beamformer and the inner
beamformer should jointly be considered to derive the two
quantities. Thus, to simplify analysis we assume a ZF beamformer
with equal power allocation for the inner beamformer although
simulation is performed for both ZF and regularized ZF (RZF) inner
beamformers. Even with this assumption of ZF for the inner
beamformer, the derivation of average SLNR is not straightforward
due to the joint nature. Thus, exploiting the fact that ZF is used
for the inner beamformer, we derive a lower bound on the average
SLNR that is a function of only channel statistics and the outer
beamformer, and our design criterion is to maximize this lower
bound on the average SLNR under the constraint that the outer
beamformer matrix has orthonormal columns.\footnote{The outer
beamformer matrix's having orthonormal columns is very desirable
for effective downlink channel estimation
\cite{Noh&Zoltowski&Sung&Love:14JSTSP} and downlink user
scheduling \cite{Nam&Adhikary&Ahn&Caire:14JSTSP, Lee&Sung:14ITsub}
purposes under the two-stage beamforming framework for massive
MIMO.} Then, we cast this constrained optimization problem as a
{\em trace quotient problem} (TQP), which is often encountered in
the field of pattern recognition, computer vision, and machine
learning \cite{Shen&Diepold&Huper:10MTNS,
Zhang&Yang&Liao:14OL,Wang&Yan&Xu&Tang&huang:07CVPR}. To obtain an
optimal solution to the formulated TQP, we modify the algorithm in
\cite{Wang&Yan&Xu&Tang&huang:07CVPR} to fit into the considered
case and show the optimality and convergence of the modified
algorithm based on existing results
\cite{Shen&Diepold&Huper:10MTNS,
Zhang&Yang&Liao:14OL,Wang&Yan&Xu&Tang&huang:07CVPR}. Numerical
results show that the proposed outer beamformer design approach
yields significant sum rate gain over the existing algorithms in
\cite{Adhikary&Nam&Ahn&Caire:13IT} and \cite{Chen&Lau:14JSAC}.

{\em Notation and Organization}  We will make use of standard
notational conventions. Vectors and matrices are written in
boldface with matrices in capitals. All vectors are column
vectors.  For a matrix $\Xbf$, $\Xbf^*$, $\Xbf^T$, $\Xbf^H$,
$[\Xbf]_{i,j}$, and $\mbox{Tr}(\Xbf)$ indicate the complex
conjugate, transpose, conjugate transpose, $(i,j)$-th element, and
trace of $\Xbf$, respectively. $\Ibf_n$ stands for the identity
matrix of size $n$ (the subscript is omitted
    when unnecessary). For vector $\xbf$, $||\xbf||$ represents the 2-norm of $\xbf$. $\diag(x_1, x_2, \cdots, x_n)$ means a diagonal
matrix with diagonal entries $x_1, x_2, \cdots, x_n$. The notation
$\xbf\sim {\cal{CN}}(\mubf,\Sigmabf)$ means that $\xbf$ is complex
circularly-symmetric Gaussian distributed with mean vector $\mubf$
and covariance matrix $\Sigmabf$.
$\Ebb\{\cdot\}$ denotes the
expectation. $\iota \defeq \sqrt{-1}$ and ${\mathbb{C}}$ is the
set of complex numbers.

The remainder of this paper is organized as follows. The system
model is described in Section \ref{sec:systemmodel}. In Section
\ref{sec:OuterBeamformerDesign}, a lower bound on the average SLNR
is derived and  the outer beamformer design problem is formulated
as a TQP. An iterative algorithm for the TQP is presented and its
optimality and convergence are shown. The performance of the
proposed algorithm is investigated in Section
\ref{sec:numericalresults}, followed by the conclusion in Section
\ref{sec:conclusion}.

\section{System Model}
\label{sec:systemmodel}

\begin{figure}[t]
\begin{psfrags}
        \psfrag{d1}[c]{\small $\sbf_1$} %
        \psfrag{dG}[c]{\small $\sbf_G$} %
        \psfrag{W1}[c]{\small $\Wbf_1$} %
        \psfrag{WG}[c]{\small $\Wbf_G$} %
        \psfrag{b1}[c]{\small $M_1$} %
        \psfrag{bG}[c]{\small $M_G$} %
        \psfrag{vd}[c]{\small $\vdots$} %
        \psfrag{V1}[c]{\small $\Vbf_1$} %
        \psfrag{VG}[c]{\small $\Vbf_G$} %
        \psfrag{1}[c]{\small $1$} %
        \psfrag{2}[c]{\small $2$} %
        \psfrag{M}[c]{\small $M$} %
        \psfrag{S1}[c]{\small $S_1$} %
        \psfrag{SG}[c]{\small $S_G$} %
        \psfrag{group1}[c]{\small Group $1$} %
        \psfrag{group2}[c]{\small Group $2$} %
        \psfrag{group3}[c]{\small Group $3$} %
        \psfrag{groupG}[c]{\small Group $G$} %
        \psfrag{base}[c]{\small Base station} %
    \centerline{ \scalefig{0.75} \epsfbox{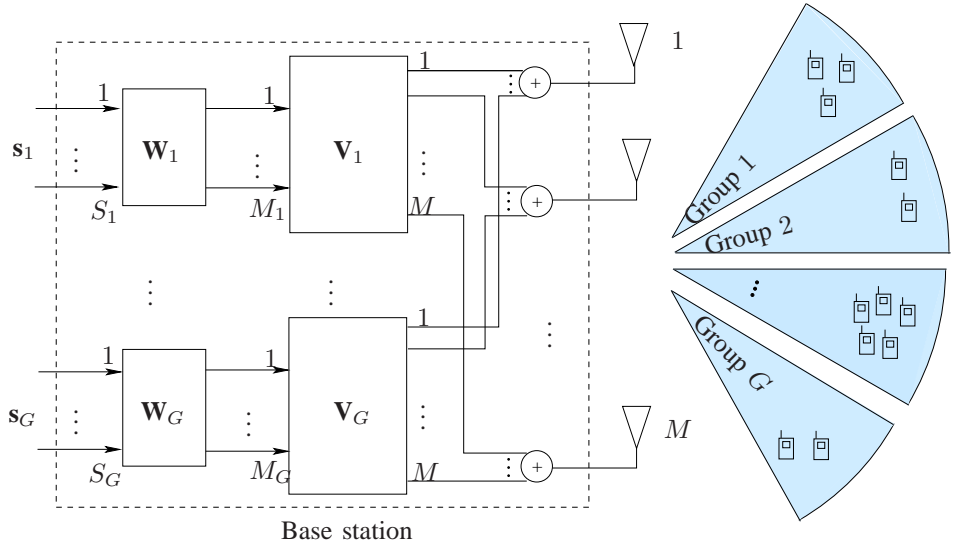} }
    \caption{Multi-group massive MIMO downlink with
    two-stage beamforming \cite{Lee&Sung:14ITsub}}
    \label{fig:two_stage_bf}
\end{psfrags}
\end{figure}

We consider a single-cell massive MIMO downlink system in which a
base station with a uniform linear array (ULA) of  $M$ transmit
antennas serves $K$ single-antenna users. We assume that $K$ users
in the cell are partitioned into $G$ groups  such that $K =
\sum_{g=1}^G K_g$, where $K_g$ is the number of users in group
$g$,  and $K_g$ users in group $g$ have the same $M \times M$
channel covariance matrix $\Rbf_g$ as in
\cite{Adhikary&Nam&Ahn&Caire:13IT, Chen&Lau:14JSAC, Liu&Lau:14SP,
Lee&Sung:14ITsub}. We assume a typical spatial correlation channel
model \cite{Kotecha&Sayeed:04SP, Choi&Love&Bidigare:14JSTSP}. That
is, the channel of user $k$ in group $g$ is given by
\begin{equation} \label{channel}
\hbf_{g_k} = \Rbf_g^{1/2} \ovl{\hbf}_{g_k},
\end{equation}
where $\ovl{\hbf}_{g_k} \in {\mathbb{C}}^{M \times 1}
\stackrel{i.i.d.}{\sim} {\cal{CN}}(\bzero, \Ibf_M).$ We assume
that the outer beamformer implements virtual sectorization, i.e.,
the outer beamformer divides the overall azimuthal angle
(typically 120 degrees in conventional cellular networks) into
multiple virtual sectors with roughly\footnote{Each sector cannot
be completely separated because the number of transmit antennas is
finite and there always exists some overlap among virtual
subsectors, even in the case we design the main coverage angle of
$2\Delta$ of each sector to be small.} 2$\Delta$ degrees for
azimuthal coverage for each virtual sector. A channel model
considering such a situation is the one-ring scattering model
\cite{Abdi&Kaveh:02JSAC, Zhang&Smith&Shafi:07WCOM}, which captures
the base station's elevation and local scattering around the
users. Under this model, the channel covariance matrix for each
sector covering $2\Delta$ azimuthal angle centered at $\theta$ can
be precomputed as \cite{Abdi&Kaveh:02JSAC,
Zhang&Smith&Shafi:07WCOM}
\begin{equation} \label{onering}
\left[ \Rbf_g \right]_{k,l} = \frac{1}{2\Delta} \int_{\theta -
\Delta}^{\theta + \Delta} e^{-\iota 2 \pi (k-l) D \sin\omega}
d\omega,
\end{equation}
where $\lambda_c$ is the carrier wavelength,  $\lambda_c D$ is the
antenna spacing, $\theta$ is the angle of the center of the
subsector, and $\Delta$ is the angle spread (AS). From here on, we
will assume that the channel covariance matrices
$\Rbf_g,g=1,\cdots,G$ are given to the base station.

As in \cite{Adhikary&Nam&Ahn&Caire:13IT,
Nam&Adhikary&Ahn&Caire:14JSTSP,Lee&Sung:14ITsub,Chen&Lau:14JSAC,
Liu&Lau:14SP},  we consider  two-stage beamforming  for downlink
transmission with outer beamformers $\{\Vbf_g,~g=1,\cdots,G\}$ for
group separation or virtual sectorization and an inner beamformer
$\Wbf_g$ for user separation within group $g$ for each
$g=1,\cdots,G$, as shown in Fig. \ref{fig:two_stage_bf}. Denote
the overall $K \times M$ channel matrix as $\Hbf = \left [
\Hbf_1^H, \Hbf_2^H, \cdots, \Hbf_G^H\right ]^H$, where $\Hbf_g =
[\hbf_{g_1}, \hbf_{g_2}, \cdots, \hbf_{g_{K_g}} ]^H$ is the $K_g
\times M$ channel matrix for the users in group $g$. Then, the
signal vector received by all the users in the cell is given by
\begin{equation}  \label{eq:firstdatamodel}
\ybf = \Hbf \Vbf \Wbf \sbf + \nbf,
\end{equation}
 where
 the overall outer beamformer matrix $\Vbf$ is partitioned into $G$
submatrices as $\Vbf = [ \Vbf_1, \Vbf_2,$ $\cdots,\Vbf_G]$ with
$\Vbf_g \in {\mathbb{C}}^{M \times M_g}$ satisfying\footnote{The
orthogonality constraint is desirable for random beamforming type
user scheduling \cite{Nam&Adhikary&Ahn&Caire:14JSTSP} or ReDOS-PBR
user scheduling \cite{Lee&Sung:14ITsub} for two-stage beamforming
based massive MIMO.}
\begin{equation}
\Vbf_g^H \Vbf_g = \Ibf;
\end{equation}
 the overall inner
beamformer $\Wbf$ has a block diagonal structure as $\Wbf = \diag(
\Wbf_1, \Wbf_2, \cdots, \Wbf_G)$ with the inner beamformer $\Wbf_g
= [ \wbf_{g_1}, \wbf_{g_2}, \cdots, \wbf_{g_{K_g}}] \in
{\mathbb{C}}^{M_g \times S_g}$ for group $g$; $\sbf = [\sbf_1^H,
\sbf_2^H, \cdots,$  $\sbf_g^H]^H$ $\sim {\cal{CN}}(\bzero,
\Ibf_S)$ is the data vector with $\sbf_g \sim {\cal{CN}}(\bzero,
\Ibf_{S_g})$; and $\nbf = [\nbf_1^H, \nbf_2^H, \cdots, \nbf_G^H]^H
\sim {\cal{CN}}(\bzero, \sigma^2\Ibf_K)$ is the noise vector.
Thus, $M_g$ is the dimension of the effective MIMO channel seen by
the inner beamformer $\Wbf_g$, and $S_g$ is the number of data
streams for group $g$. We assume that the base station has an
average transmit power constraint $\Tr(\Vbf \Wbf \Wbf^H \Vbf^H)
\le P_T$.

Combining $\Hbf\Vbf$, we can rewrite the data model
(\ref{eq:firstdatamodel})  as
\begin{equation}
\ybf =\wt{\Hbf} \Wbf \sbf + \nbf,
\end{equation}
where
\begin{equation}
\wt{\Hbf} \triangleq \Hbf\Vbf = \left[
\begin{array}{cccc}
\Hbf_1 \Vbf_1 & \Hbf_1 \Vbf_2 & \cdots & \Hbf_1 \Vbf_G \\
\Hbf_2 \Vbf_1 & \Hbf_2 \Vbf_2 & \cdots & \Hbf_2 \Vbf_G \\
\vdots        & \vdots        & \ddots &  \vdots       \\
\Hbf_G \Vbf_1 & \Hbf_G \Vbf_2 & \cdots & \Hbf_G \Vbf_G \\
\end{array}
\right]
\end{equation}
and
 $\wt{\Hbf}_g
\triangleq \Hbf_g \Vbf_g \in {\mathbb{C}}^{K_g \times M_g}$ is the
effective MIMO channel seen by the inner beamformer $\Wbf_g$ for
group $g$. We assume that the CSI of the effective MIMO channel
$\wt{\Hbf}_g$, $g=1,\cdots,G$, is available to the transmitter
(please see \cite{Noh&Zoltowski&Sung&Love:14JSTSP}) and the inner
beamformer $\Wbf_g \in {\mathbb{C}}^{M_g \times S_g}$ ($M_g \ge
S_g$) for each $g = 1, 2, \cdots, G$ is designed as a zero-forcing
beamformer with equal power $||\wbf_{g_k}||^2 = 1$ for each user
based on the effective CSI $\wt{\Hbf}_g$, i.e.,
\begin{equation}
\Wbf_{g} =  \Vbf_g^H \Hbf_{g}^H \left( \Hbf_{g} \Vbf_{g} \Vbf_g^H
\Hbf_{g}^H \right)^{-1} \Pbf_g
\end{equation}
where $\Pbf_g = \diag( \sqrt{P_{g_1}}, \cdots$, {\small ${\sqrt{P_{ g_{S_g}}}}$}), and  $P_{g_k}$ is the transmit power scaling factor for the user $k$ in the $g$-th group satisfying $||\wbf_{g_k}||^2 = 1$.
The received signal vector for the users in group $g$ is given by
\begin{equation}
\ybf_g = \Hbf_g \Vbf_g \Wbf_g \sbf_g + \sum_{g' \neq g} \Hbf_g
\Vbf_{g'} \Wbf_{g'} \sbf_{g'} + \nbf_g,
\end{equation}
where $\sbf_g = [s_{g_1}, s_{g_2}, \cdots s_{g_{S_g}}]^T \in
{\mathbb{C}}^{S_g \times 1}$ and $\nbf_g  = [n_{g_1}, n_{g_2},
\cdots n_{g_{K_g}}]^T \in {\mathbb{C}}^{K_g \times 1}$ are the
data and noise vectors for group $g$, respectively.  The received
signal of user $k$ in group $g$ is given by
\begin{equation} \label{signalmodel}
y_{g_k} = \hbf^H_{g_k} \Vbf_{g} \wbf_{g_k}s_{g_k} + \sum_{k' \neq
k} \hbf^H_{g_k} \Vbf_{g} \wbf_{g_{k'}} s_{g_{k'}} + \sum_{g' \neq
g} \sum_{j = 1}^{K_{g'}} \hbf_{g_k}^H  \Vbf_{g'} \wbf_{g'_j}
s_{g'_j} + n_{g_k},
\end{equation}
where the second and third terms in the right-hand side (RHS) of
\eqref{signalmodel} are the intra-group and inter-group
interference, respectively.  With the assumed ZF inner beamforming
the intra-group interference is completely eliminated, and  the
signal-to-interference-plus-noise ratio (SINR) at  user $k$ in
group  $g$ is given by
\begin{equation}  \label{eq:SINRformula}
\text{SINR}_{g_k} = \frac{ |\hbf^H_{g_k} \Vbf_{g}
\wbf_{g_k}|^2}{ \sum_{g' \neq g} \sum_{j = 1}^{K_{g'}}
|\hbf_{g_k}^H  \Vbf_{g'} \wbf_{g'_j}|^2 + \sigma^2}.
\end{equation}
Using the SINR  as the optimization criterion, one could try to
design the outer beamformer $\{\Vbf_g, ~g=1,\cdots,G\}$ to
maximize a relevant measure such as the sum rate. However, this
criterion generally leads to a challenging nonconvex optimization
problem since each user's SINR is jointly dependent on $\{\Vbf_g,
g=1,\cdots,G\}$ in a nonconvex manner
\cite{Sadek&Tarighat&Sayed:07WCOM,Zakhour&Gesbert:09WSA}.
 To circumvent the difficulty, we here adopt the SLNR approach
 proposed in \cite{Sadek&Tarighat&Sayed:07WCOM} and
 \cite{Zakhour&Gesbert:09WSA}. The SLNR approach considers the ratio between the signal power to the desired receiver and the power of the total
 interference to undesired receivers caused by the desired transmitter plus noise, not
 the power of the total interference received at the desired
 receiver. The rationale for this approach is that it is
 reasonable for the  transmitter to maximize the signal power to the desired receiver
 for a given allowed level of interference to undesired
 receivers in multi-user interference channels. The SLNR method is shown to be Pareto-optimal in the rate region in
 certain MIMO interference channel scenarios  \cite{Zakhour&Gesbert:09WSA,
 Park&Lee&Sung&Yukawa:13SP}.
Under the assumption of ZF inner beamforming with equal power
allocation, the  SLNR for user $k$ in group $g$ is given by
\begin{equation}  \label{eq:SLNRformula}
\text{SLNR}_{g_k} = \frac{ | \hbf^H_{g_k} \Vbf_{g} \wbf_{g_k} |^2 }
                    {  \sum_{g' \neq g}
\sum_{j = 1}^{K_{g'}} |\hbf_{g'_j}^H  \Vbf_{g} \wbf_{g_k} |^2  +
\sigma^2 }.
\end{equation}
Note that the key difference between \eqref{eq:SINRformula} and
\eqref{eq:SLNRformula} is that the SINR of user $g_k$ (at the
receiver side) is a joint function of $\{\Vbf_g, g=1,\cdots,G\}$,
whereas the SLNR of user $g_k$ (at the transmitter side) is a
function of only $\Vbf_g$ not of $\{\Vbf_{g^\prime}, g^\prime \ne
g\}$. Note also that the SLNR of user $g_k$ is a function of the
channel $\{\hbf_{g_k}, \hbf_{g'_j}, g' \neq g,
j=1,\cdots,K_{g'}\}$ and the inner beamformer $\wbf_{g_k}$ in addition to $\Vbf_g$.

\section{Outer Beamformer Design Criterion and Optimization}
\label{sec:OuterBeamformerDesign}

Recall that the main advantage of the two-stage beamforming
results from the fact that the outer beamformer is designed
without knowing the CSI of the actual channel $\{ \hbf_{g_k},
 k=1,\cdots, K_g, g=1,\cdots, G\}$ \cite{Adhikary&Nam&Ahn&Caire:13IT,
Nam&Adhikary&Ahn&Caire:14JSTSP,Lee&Sung:14ITsub,Chen&Lau:14JSAC,
Liu&Lau:14SP}. Thus, the outer beamformer should be designed based
only on the channel covariance matrices $\{\Rbf_g, g=1,\cdots,G\}$ and
this leads to using the {\em average} SLNR as our design
criterion. Hence,  we formulate the outer beamformer design
problem as follows:
\begin{equation} \label{Problem1}
{\cal{P}}_1  : ~~\Vbf_g^* ~=~ \mathop{\arg \max}_{\Vbf_g^H \Vbf_g
= \Ibf } ~~ \sum_{k = 1}^{K_g} \Ebb [ \text{SLNR}_{g_k}]
~~\text{for each}~~ g = 1, 2, \cdots, G.
\end{equation}
Although Problem ${\cal{P}}_1$ is conceptually simple, solving the
optimization problem is not straightforward. The first difficulty
is the derivation of the average SLNR since the random quantities (i.e., the channel vectors) are both in the numerator and the
denominator as seen in \eqref{eq:SLNRformula} and a closed-form expression of the average SLNR is
not available. To circumvent this difficulty,  we first derive a
lower bound on the average SLNR and maximize this lower bound on
the average SLNR under the constraint $\Vbf_g^H\Vbf_g=\Ibf$.

 For a
given outer beamformer $\Vbf_g$ and a given inner beamformer
$\Wbf_g$, the  SLNR of user $g_k$ averaged over  channel realizations
is lower bounded as follows:
\begin{align}
\Ebb \left\{\text{SLNR}_{g_k}\right\} &= \Ebb  \left\{\frac{ | \hbf^H_{g_k} \Vbf_{g} \wbf_{g_k} |^2 }
                    {  \sum_{g' \neq g}
\sum_{j = 1}^{K_{g'}} |\hbf_{g'_j}^H  \Vbf_{g} \wbf_{g_k} |^2  + \sigma^2 } \right\}\\
&\stackrel{(a)}{\ge}  \Ebb  \left\{ \frac{ | \hbf^H_{g_k} \Vbf_{g} \wbf_{g_k} |^2 }
                    {  \sum_{g' \neq g}
\sum_{j = 1}^{K_{g'}} ||\hbf_{g'_j}^H  \Vbf_{g} ||^2 || \wbf_{g_k} ||^2  + \sigma^2 } \right\}
\\
&\stackrel{(b)}{=}  \Ebb  \left\{ \frac{ | \hbf^H_{g_k} \Vbf_{g}
\wbf_{g_k} |^2 }
                    {  \sum_{g' \neq g}
\sum_{j = 1}^{K_{g'}} ||\hbf_{g'_j}^H  \Vbf_{g} ||^2  + \sigma^2 } \right\}\\
&\stackrel{(c)}{=} \Ebb  \left\{  | \hbf^H_{g_k} \Vbf_{g}
\wbf_{g_k} |^2  \right\} \Ebb \left\{\frac{  1 }
                    { \sum_{g' \neq g}
\sum_{j = 1}^{K_{g'}} ||\hbf_{g'_j}^H  \Vbf_{g} ||^2  + \sigma^2 } \right\} \nonumber \\
&\stackrel{(d)}{\ge} \Ebb  \left\{  | \hbf^H_{g_k} \Vbf_{g}
\wbf_{g_k} |^2  \right\} \frac{  1 }
                    { \Ebb \left\{ \sum_{g' \neq g}
\sum_{j = 1}^{K_{g'}} ||\hbf_{g'_j}^H  \Vbf_{g} ||^2  + \sigma^2 \right\}}  \nonumber \\
&=  \frac{ \Ebb  \left\{ | \hbf^H_{g_k} \Vbf_{g} \wbf_{g_k} |^2\right ] }
                    { \sum_{g' \neq g}
\sum_{j = 1}^{K_{g'}} \Tr( \Vbf_{g}^H \Ebb \{ \hbf_{g'_j}\hbf_{g'_j}^H \} \Vbf_{g} )   + \sigma^2 } \\
&\stackrel{(e)}{=} \frac{ \Ebb  \left\{ | \hbf^H_{g_k} \Vbf_{g}
\wbf_{g_k} |^2\right\} }
                    {
 \Tr( \Vbf_{g}^H  \Rbf_{g,2} \Vbf_{g} )  }. \label{lowerbound1}
\end{align}
Here, (a) follows from the sub-multiplicativity of  norm $||\Abf
\Bbf|| \le ||\Abf||||\Bbf||$; (b) follows from the equal power
allocation $||\wbf_{g_k}||=1$; (c) follows from the independence
between the desired-group channels and  other group channels; (d)
results from Jensen's inequality since  the function
$f(x)=\frac{1}{x}$ is convex for $x \ge 0$; and (e) follows from
\begin{equation}  \label{eq:Rbfg2}
\Rbf_{g,2} \defeq \sum_{g' \neq g} K_{g'} \Rbf_{g'} +
\frac{\sigma^2}{M_g}\Ibf.
\end{equation}

The next difficulty in deriving a lower bound on the average SLNR
is that the ZF inner beamformer $\wbf_{g_k}$ is based on the CSI
of the effective channel $\Hbf_g \Vbf_g$.
However, at the time of designing $\Vbf_g$ this effective channel
is not determined. This difficulty is properly circumvented by
exploiting the property of ZF inner beamforming and a lower bound
on the average signal power appearing in the numerator of the RHS
of  \eqref{lowerbound1} is given in the following theorem.

\vspace{0.5em}
\begin{theorem} \label{theorem:lowerbound}
 When the outer beamformer $\Vbf_g$ is given and the inner-beamformer  $\Wbf_g
= [ \wbf_{g_1}, \wbf_{g_2},$ $\cdots, \wbf_{g_{K_g}}]$ is a ZF
beamformer with equal power allocation,  $\Ebb \left[ |\hbf^H_{g_k} \Vbf_{g}
\wbf_{g_k} |^2 \right ]$ is lower-bounded as
\begin{equation}
\Ebb \left\{ |\hbf^H_{g_k} \Vbf_{g} \wbf_{g_k} |^2 \right\} \ge
\Tr(\Vbf_g^H \Rbf_g \Vbf_g) - (K_g - 1)\lambda_g,
\end{equation}
where $\lambda_g$ is the largest eigenvalue of the channel covariance matrix $\Rbf_g$ of group $g$. \\
\end{theorem}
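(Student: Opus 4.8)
The plan is to first evaluate the signal term exactly by exploiting the zero-forcing structure, and then reduce the resulting random quantity to a trace that can be bounded deterministically. Let me denote $\ubf_k \defeq \Vbf_g^H \hbf_{g_k}$, so that the $k$-th row of the effective channel $\wt\Hbf_g = \Hbf_g \Vbf_g$ is $\ubf_k^H$. Because $\hbf_{g_k} = \Rbf_g^{1/2}\ovl\hbf_{g_k}$ with $\ovl\hbf_{g_k}\sim{\cal{CN}}(\bzero,\Ibf_M)$ and $\Vbf_g^H\Vbf_g = \Ibf$, the vectors $\ubf_1,\dots,\ubf_{K_g}$ are i.i.d. ${\cal{CN}}(\bzero,\Cbf_g)$ with $\Cbf_g\defeq\Vbf_g^H\Rbf_g\Vbf_g$. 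Writing the ZF beamformer column as $\wbf_{g_k}=\sqrt{P_{g_k}}\,\wt\Hbf_g^H(\wt\Hbf_g\wt\Hbf_g^H)^{-1}\ebf_k$, the constraint $\|\wbf_{g_k}\|^2=1$ forces $P_{g_k}=1/[(\wt\Hbf_g\wt\Hbf_g^H)^{-1}]_{k,k}$, and the interference-cancelling identity $\wt\Hbf_g\wbf_{g_k}=\sqrt{P_{g_k}}\,\ebf_k$ gives $\hbf_{g_k}^H\Vbf_g\wbf_{g_k}=\sqrt{P_{g_k}}$. Hence $|\hbf_{g_k}^H\Vbf_g\wbf_{g_k}|^2 = P_{g_k} = 1/[(\wt\Hbf_g\wt\Hbf_g^H)^{-1}]_{k,k}$, and the problem reduces to lower-bounding $\Ebb\{1/[(\wt\Hbf_g\wt\Hbf_g^H)^{-1}]_{k,k}\}$.

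The key step is to recognize $\wt\Hbf_g\wt\Hbf_g^H$ as the Gram matrix of $\ubf_1,\dots,\ubf_{K_g}$ and invoke the standard identity that $1/[(\wt\Hbf_g\wt\Hbf_g^H)^{-1}]_{k,k}$ equals the squared distance from $\ubf_k$ to $\mathrm{span}\{\ubf_l:l\neq k\}$. I would establish this by a Schur-complement computation on the $k$-th diagonal block, which yields $1/[(\wt\Hbf_g\wt\Hbf_g^H)^{-1}]_{k,k} = \ubf_k^H(\Ibf - \boldsymbol{\Pi}_{\bar k})\ubf_k$, where $\boldsymbol{\Pi}_{\bar k}$ is the orthogonal projector onto $\mathrm{span}\{\ubf_l:l\neq k\}$. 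Since $\boldsymbol{\Pi}_{\bar k}$ depends only on $\{\ubf_l:l\neq k\}$ and is therefore independent of $\ubf_k$, conditioning on the other users and using $\Ebb\{\ubf_k\ubf_k^H\}=\Cbf_g$ gives $\Ebb\{\ubf_k^H(\Ibf-\boldsymbol{\Pi}_{\bar k})\ubf_k\mid \boldsymbol{\Pi}_{\bar k}\} = \Tr((\Ibf-\boldsymbol{\Pi}_{\bar k})\Cbf_g) = \Tr(\Cbf_g)-\Tr(\boldsymbol{\Pi}_{\bar k}\Cbf_g)$. Taking the outer expectation, $\Ebb\{|\hbf_{g_k}^H\Vbf_g\wbf_{g_k}|^2\} = \Tr(\Vbf_g^H\Rbf_g\Vbf_g)-\Ebb\{\Tr(\boldsymbol{\Pi}_{\bar k}\Cbf_g)\}$, which already produces the leading term of the claimed bound.

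It then remains to upper-bound $\Ebb\{\Tr(\boldsymbol{\Pi}_{\bar k}\Cbf_g)\}$ by $(K_g-1)\lambda_g$. Since $\boldsymbol{\Pi}_{\bar k}$ is an orthogonal projector onto a subspace of dimension at most $K_g-1$, writing $\boldsymbol{\Pi}_{\bar k}=\sum_{i}\qbf_i\qbf_i^H$ in an orthonormal basis of that subspace gives $\Tr(\boldsymbol{\Pi}_{\bar k}\Cbf_g)=\sum_i\qbf_i^H\Cbf_g\qbf_i\le (K_g-1)\lambda_{\max}(\Cbf_g)$. Finally, because $\Vbf_g$ has orthonormal columns, the Rayleigh-quotient estimate $\lambda_{\max}(\Cbf_g)=\max_{\|\xbf\|=1}(\Vbf_g\xbf)^H\Rbf_g(\Vbf_g\xbf)\le\lambda_{\max}(\Rbf_g)=\lambda_g$ holds, giving $\Tr(\boldsymbol{\Pi}_{\bar k}\Cbf_g)\le(K_g-1)\lambda_g$ for every realization, hence in expectation. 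Substituting yields the theorem. I expect the main obstacle to be the identification in the second paragraph: both the Gram-matrix distance identity and, crucially, the observation that $\boldsymbol{\Pi}_{\bar k}$ is statistically independent of $\ubf_k$, which is what converts the awkward expectation of a matrix-inverse diagonal entry into a clean trace; the remaining eigenvalue estimates are routine.
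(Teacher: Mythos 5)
Your proof is correct, and it arrives at the same pivotal identity as the paper's: the ZF signal power equals the squared norm of the projection of the effective channel $\Vbf_g^H\hbf_{g_k}$ onto the orthogonal complement of $\mathrm{span}\{\Vbf_g^H\hbf_{g_l}:l\neq k\}$, after which one conditions on the co-scheduled users' channels and bounds a trace. The differences from the paper are in justification rather than in substance, at two points. First, where the paper obtains this identity by quoting the projection form of the normalized ZF column, $\wbf_{g_k}=\wt{\Ubf}_{g_k}\wt{\Ubf}_{g_k}^H\Vbf_g^H\hbf_{g_k}/\|\wt{\Ubf}_{g_k}\wt{\Ubf}_{g_k}^H\Vbf_g^H\hbf_{g_k}\|$ with $\wt{\Ubf}_{g_k}$ spanning the null space of $\wt{\Hbf}_{g,-k}$ (its Lemma~1, citing Li \emph{et al.}), you derive it from scratch via the power normalization $P_{g_k}=1/[(\wt{\Hbf}_g\wt{\Hbf}_g^H)^{-1}]_{k,k}$ and the Schur-complement identity for Gram matrices; this is self-contained and makes explicit the generic invertibility of $\wt{\Hbf}_g\wt{\Hbf}_g^H$ that the paper's SVD argument also implicitly assumes, and note that your $\Ibf-\boldsymbol{\Pi}_{\bar k}$ is exactly the paper's $\wt{\Ubf}_{g_k}\wt{\Ubf}_{g_k}^H$. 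Your conditioning step then reproduces the paper's Lemma~2 in different notation, and in fact yields the slightly stronger exact statement $\Ebb\{|\hbf_{g_k}^H\Vbf_g\wbf_{g_k}|^2\}=\Tr(\Vbf_g^H\Rbf_g\Vbf_g)-\Ebb\{\Tr(\boldsymbol{\Pi}_{\bar k}\Vbf_g^H\Rbf_g\Vbf_g)\}$, isolating all slack in the final estimate. Second, for that final estimate the paper invokes the eigenvalue-trace extremal lemma of Horn--Johnson, Cor.~4.3.18 (its Lemma~3), applied to the normalized matrix $\frac{1}{\lambda_g}\Vbf_g^H\Rbf_g\Vbf_g$, whereas you use the elementary bound $\Tr(\boldsymbol{\Pi}_{\bar k}\Vbf_g^H\Rbf_g\Vbf_g)\le\mathrm{rank}(\boldsymbol{\Pi}_{\bar k})\,\lambda_{\max}(\Vbf_g^H\Rbf_g\Vbf_g)\le(K_g-1)\lambda_g$, with the Rayleigh-quotient step $\lambda_{\max}(\Vbf_g^H\Rbf_g\Vbf_g)\le\lambda_g$ coinciding with the paper's Lemma~4; the two are equivalent, yours being a touch more direct. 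In short: same three-step skeleton, with a Schur-complement substitute for the cited ZF-projection lemma and an elementary replacement for the eigenvalue-trace lemma.
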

\begin{proof}
 See Appendix.
\end{proof}

\vspace{0.5em}

 Applying Theorem \ref{theorem:lowerbound} to
\eqref{lowerbound1}, we obtain a lower bound on the average SLNR,
given by
\begin{align}
\Ebb \left[ \text{SLNR}_{g_k} \right]
&\ge  \frac{ \Ebb \left[| \hbf^H_{g_k} \Vbf_{g} \wbf_{g_k} |^2\right ] }
                    {
 \Tr( \Vbf_{g}^H  \Rbf_{g,2} \Vbf_{g} )  }  \nonumber\\
&   \ge  \frac{ \Tr(\Vbf_g^H \Rbf_g \Vbf_g) - (K_g - 1)\lambda_g  }
                    { \Tr \left (\Vbf_{g}^H  \Rbf_{g,2} \Vbf_{g} \right )  }  \nonumber \\
&=  \frac{ \Tr(\Vbf_g^H  \Rbf_{g,1} \Vbf_g)   }
                    { \Tr \left (\Vbf_{g}^H  \Rbf_{g,2}  \Vbf_{g} \right )  }
\end{align}
where $\Rbf_{g,1}$ is defined as
\begin{equation}  \label{eq:Rbfg1}
\Rbf_{g,1}
\defeq \Rbf_g - \frac{(K_g - 1)}{M_g}\lambda_g \Ibf.
\end{equation}
Note that the derived lower bound on the average SLNR depends only
on the group index $g$ not on the user index  $k$. This makes
sense since each user in the same group has the same channel
statistics. Finally, the proposed outer beamformer design based
only on the channel statistic information
$\{\Rbf_g,g=1,\cdots,G\}$ is formulated as

\begin{equation} \label{Problem2}
{\cal{P}}_2  : ~~\Vbf_g^* ~=~ \mathop{\arg \max}_{\Vbf_g^H \Vbf_g
= \Ibf } ~~ \frac{ \Tr(\Vbf_g^H  \Rbf_{g,1} \Vbf_g)   }
                    { \Tr \left (\Vbf_{g}^H  \Rbf_{g,2}  \Vbf_{g} \right )  } ~~\text{for each}~~ g = 1, 2, \cdots,
                    G,
\end{equation}
where $\Rbf_{g,1}$ and $\Rbf_{g,2}$ are given by \eqref{eq:Rbfg1}
and \eqref{eq:Rbfg2}, respectively.  Note that in Problem
\eqref{Problem2}, the outer beamformer design is performed for
each group separately. This is an advantage of the proposed design
method; complicated joint optimization is not required.

Note that in Problem ${\cal{P}}_2 $, $\Rbf_{g,1}$ is Hermitian and
$\Rbf_{g,2}$ is positive definite due to the added identity matrix
in $\Rbf_{g,2}$ in \eqref{eq:Rbfg2}. Problem ${\cal{P}}_2 $
maximizes the quotient (or ratio) of two traces under the
constraint $\Vbf_g^H \Vbf_g = \Ibf$, and is known as a {\em trace
quotient problem (TQP) or trace ratio problem (TRP)}, which is
often encountered in linear discriminant analysis (LDA) for
feature extraction and dimension reduction
\cite{Shen&Diepold&Huper:10MTNS,Zhang&Yang&Liao:14OL}.  Several
research works have been performed to understand  the theoretical
properties of TQP and to develop numerical algorithms for TQP
\cite{Wang&Yan&Xu&Tang&huang:07CVPR, Shen&Diepold&Huper:10MTNS,
Zhang&Yang&Liao:14OL}.

If we relax the orthonormality constraint $\Vbf_g^H \Vbf_g = \Ibf$
for the outer beamformer,  then Problem ${\cal{P}}_2 $ reduces to
the following simpler optimization problem:
\begin{equation} \label{Problem3}
{\cal{P}}_3  : ~~\Vbf_g^\star ~=~ \mathop{\arg \max} ~~ \frac{
\Tr(\Vbf_g^H \Rbf_{g,1} \Vbf_g)   }
                    { \Tr \left (\Vbf_{g}^H  \Rbf_{g,2}  \Vbf_{g} \right )  } ~~\text{for each}~~ g = 1, 2, \cdots, G.
\end{equation}
It was shown in \cite{Sadek&Tarighat&Sayed:07WCOM} that the
optimal $\Vbf_g^\star$ of Problem ${\cal{P}}_3$ is the $M_g$
generalized eigenvectors corresponding to the $M_g$ largest
generalized eigenvalues of the matrix pencil $\Rbf_{g,1}
-\lambda\Rbf_{g,2}$. There exist many available fast algorithms to
obtain the generalized eigenvectors of the positive definite
matrix pencil $\Rbf_{g,1} -\lambda\Rbf_{g,2}$
\cite{Golub:book,Li:13SIAM}. Note that the obtained generalized
eigenvectors from Problem ${\cal{P}}_3$ do not satisfy the
orthogonality constraint $\Vbf_g^H \Vbf_g = \Ibf$ in general. The
optimal solution $\Vbf_g^\star$ of Problem ${\cal{P}}_3 $ can be
decomposed by thin singular value decomposition (SVD) as
\begin{equation} \label{suboptimal}
\Vbf_g^\star = \Phibf_g \Dbf_g \Psibf_g,
\end{equation}
where $\Phibf_g$ is an $M \times M_g$  matrix satisfying
$\Phibf_g^H\Phibf_g = \Ibf$; $\Dbf_g$ is an $M_g \times M_g$
diagonal matrix; and $\Psibf_g$ is an $M_g \times M_g$ unitary
matrix.  By setting
\begin{equation} \label{eq:quotientTrace}
\Vbf_g = \Phibf_g,
\end{equation}
 we can obtain an outer beamformer  satisfying
the constraint $\Vbf_g^H \Vbf_g = \Ibf$. However, this approach
does not necessarily yield an optimal solution to Problem
${\cal{P}}_2$.

\subsection{The outer beamformer design algorithm and its convergence}

It is  known that there is no closed-form solution to TQP
\cite{Wang&Yan&Xu&Tang&huang:07CVPR, Shen&Diepold&Huper:10MTNS,
Zhang&Yang&Liao:14OL}.  In this section, we tackle the original
TQP ${\cal{P}}_2$, and present an iterative algorithm by modifying
the algorithm in \cite{Wang&Yan&Xu&Tang&huang:07CVPR} developed
for the real matrix case to the complex matrix case, to directly
optimize the objective function subject to the orthonormality
constraint. Denote the objective function of Problem ${\cal{P}}_2$
at the $n$-th iteration by
\begin{equation} \label{object}
\rho_n = \frac{ \Tr({\Vbf_g^{(n-1)}}^H  \Rbf_{g,1}
{\Vbf_g^{(n-1)}}) }
                    { \Tr \left ({\Vbf_g^{(n-1)}}^H  \Rbf_{g,2}  {\Vbf_g^{(n-1)}} \right )
                    },
\end{equation}
where  $\Vbf_g^{(n-1)}$ is the outer beamformer at the $(n-1)$-th
iteration.  The outer beamformer $\Vbf_g^{(n)}$ is updated by
solving the following trace difference problem:
\begin{equation}
  \Vbf_g^{(n)} = \mathop{\arg \max}_{\Vbf_g^H \Vbf_g = \Ibf  } \Tr({\Vbf_g}^H ( \Rbf_{g,1} - \rho_n \Rbf_{g,2})
  \Vbf_g),
\end{equation}
where $\rho_n$ is given by \eqref{object}. The outer beamformer
$\Vbf_g^{(n)}$ of the trace difference problem in the $n$-th
iteration is updated by the $M_g$ eigenvectors corresponding to
the $M_g$ largest eigenvalues of the Hermitian matrix $\Rbf_{g,1}
- \rho_n \Rbf_{g,2}$ for given $\rho_n$. This procedure is
iterated until the iteration converges. The proposed algorithm for
Problem ${\cal{P}}_2$ is summarized in Algorithm \ref{alg:OBDTQP}.

\begin{algorithm}[t]
\caption{Outer beamformer design by TQP based on \cite{Wang&Yan&Xu&Tang&huang:07CVPR}} \label{alg:OBDTQP}

\begin{algorithmic}
\REQUIRE  The channel covariance matrices $\{\Rbf_g,
~g=1,\cdots,G\}$,  the noise variance $\sigma^2$, and the stopping
tolerance $\epsilon
> 0$.

\STATE Construct $\Rbf_{g,1}$ and $\Rbf_{g,2}$ by \eqref{eq:Rbfg1}
and \eqref{eq:Rbfg2}.

\STATE Set $\rho_{-1}=-\infty$ and initialize $\Vbf_g^{(0)}$ by
\eqref{eq:quotientTrace}.

\FOR{$n=1,2,\cdots$}

\STATE Compute the trace ratio $\rho_n$ from  $\Vbf_g^{(n-1)}$:
\begin{equation}
\rho_n = \frac{ \Tr({\Vbf_g^{(n-1)}}^H  \Rbf_{g,1}
{\Vbf_g^{(n-1)}}) }
                    { \Tr \left ({\Vbf_g^{(n-1)}}^H  \Rbf_{g,2}  {\Vbf_g^{(n-1)}} \right )  }
\end{equation}

\STATE Solve the trace difference problem
\begin{equation}  \label{eq:Algo1update}
  \Vbf_g^{(n)} = \mathop{\arg \max}_{\Vbf_g^H \Vbf_g = \Ibf  } ~\Tr({\Vbf_g}^H ( \Rbf_{g,1} - \rho_n \Rbf_{g,2})
  \Vbf_g).
\end{equation}
That is, update $\Vbf_g^{(n)} = [ \vbf_{g,1}^{(n)},
\vbf_{g,2}^{(n)}, \cdots, \vbf_{g,M_g}^{(n)}] $ by solving
\begin{equation}
( \Rbf_{g,1} - \rho_n \Rbf_{g,2}) \vbf_{g,k}^{(n)} = \nu_k^{(n)}
\vbf_{g,k}^{(n)},
\end{equation}
where $\vbf_{g,k}^{(n)}$ is the eigenvector corresponding to the
$k$-th largest eigenvalue  $\nu_k^{(n)}$ of $\Rbf_{g,1} - \rho_n
\Rbf_{g,2}$.

\IF{$|\rho_n - \rho_{n-1} | < \epsilon$}

\STATE Break the loop.

\ENDIF

\ENDFOR

\end{algorithmic}
\end{algorithm}

The monotonic increase of the objective function $\rho_n$ by  Algorithm
\ref{alg:OBDTQP} is guaranteed by the following theorem:

\vspace{0.5em}

\begin{theorem} \label{theorem:convergence}
Algorithm 1 monotonically increases the trace quotient $\rho_n$, i.e.,  $\rho_{n+1} \ge \rho_{n}$ for all $n$. \\
\end{theorem}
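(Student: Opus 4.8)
The plan is to exploit the Dinkelbach-type structure already built into the algorithm: the trace-difference update at step $n$ uses precisely the ratio $\rho_n$ achieved at the previous iterate, so that the previous iterate sits exactly at the zero level of the trace-difference objective. Write $A \defeq \Rbf_{g,1}$ (Hermitian) and $B \defeq \Rbf_{g,2}$ (positive definite), and let $V_n \defeq \Vbf_g^{(n)}$, so that by \eqref{object} we have $\rho_{n} = \Tr(V_{n-1}^H A V_{n-1})/\Tr(V_{n-1}^H B V_{n-1})$ and, by \eqref{eq:Algo1update}, $V_n$ maximizes $\Tr(V^H(A-\rho_n B)V)$ over all $V$ with $V^H V = \Ibf$.

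First I would use optimality of $V_n$ for the trace-difference problem. Since $V_{n-1}$ also satisfies the orthonormality constraint $V_{n-1}^H V_{n-1} = \Ibf$, it is a feasible point for the maximization in \eqref{eq:Algo1update}, and therefore
\begin{equation}
\Tr\!\left(V_n^H(A-\rho_n B)V_n\right) \ge \Tr\!\left(V_{n-1}^H(A-\rho_n B)V_{n-1}\right).
\end{equation}
The second step is to evaluate the right-hand side. Expanding the trace difference and substituting the definition of $\rho_n$ gives
\begin{equation}
\Tr\!\left(V_{n-1}^H(A-\rho_n B)V_{n-1}\right) = \Tr(V_{n-1}^H A V_{n-1}) - \rho_n \Tr(V_{n-1}^H B V_{n-1}) = 0,
\end{equation}
so the chain above yields $\Tr(V_n^H A V_n) - \rho_n \Tr(V_n^H B V_n) \ge 0$.

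The final step is to convert this trace-difference inequality back into a statement about the trace quotient. Because $B = \Rbf_{g,2}$ is positive definite (by the added identity term in \eqref{eq:Rbfg2}) and $V_n$ has orthonormal columns, the denominator $\Tr(V_n^H B V_n) > 0$; dividing by it preserves the inequality and gives
\begin{equation}
\rho_{n+1} = \frac{\Tr(V_n^H \Rbf_{g,1} V_n)}{\Tr(V_n^H \Rbf_{g,2} V_n)} \ge \rho_n,
\end{equation}
which is the claimed monotonicity. The argument is elementary once the Dinkelbach structure is recognized; the only point demanding a moment of care is the strict positivity of $\Tr(V_n^H \Rbf_{g,2} V_n)$, needed so that the division in the last step is legitimate, and this is exactly where the positive definiteness of $\Rbf_{g,2}$ is used. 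I would also remark in passing that, since $\rho_n$ is bounded above (for instance by the largest generalized eigenvalue of the pencil $\Rbf_{g,1}-\lambda\Rbf_{g,2}$), the monotone sequence $\{\rho_n\}$ converges, which establishes convergence of the objective value produced by Algorithm \ref{alg:OBDTQP}.
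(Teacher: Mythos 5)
Your proposal is correct and follows essentially the same argument as the paper's proof: feasibility of $\Vbf_g^{(n-1)}$ makes the trace-difference objective zero there, optimality of $\Vbf_g^{(n)}$ gives $\Tr({\Vbf_g^{(n)}}^H(\Rbf_{g,1}-\rho_n\Rbf_{g,2})\Vbf_g^{(n)}) \ge 0$, and positive definiteness of $\Rbf_{g,2}$ licenses the division yielding $\rho_{n+1} \ge \rho_n$. Your explicit identification of the Dinkelbach structure and the closing remark on boundedness of $\{\rho_n\}$ are sound additions but do not change the route.
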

\begin{proof}
See Appendix.
\end{proof}

\vspace{0.5em}

Theorem \ref{theorem:convergence} implies  that the proposed outer
beamformer design algorithm converges at least to  a local
optimum. Furthermore, the proposed algorithm actually yields a
globally optimal solution to maximize the trace ratio of Problem
${\cal{P}}_2$ under the orthogonality constraint. This is because
it was shown that any local optimum is also a global optimum for
TQP \cite{Shen&Diepold&Huper:10MTNS, Zhang&Yang&Liao:14OL}
and thus any convergent algorithm yields a global maximizer of
TQP.

\subsection{Discussion}

Now consider the difference between the proposed outer beamformer
design method here and the algorithm in
\cite{Chen&Lau:14JSAC}. (In Section
\ref{sec:numericalresults}, it will be seen that both algorithms
perform better than the outer beamformer design method in
\cite{Adhikary&Nam&Ahn&Caire:13IT}.) The algorithm in
\cite{Chen&Lau:14JSAC} basically minimizes
\begin{equation}  \label{eq:LauAlgo}
\left(\sum_{g^\prime \ne g} \Rbf_{g^\prime} \right) - w \Rbf_g,
\end{equation}
where $w$ is an arbitrary chosen constant weighting factor.
Problem \eqref{eq:LauAlgo} is equivalent to maximizing
\begin{equation}  \label{eq:LauAlgo1}
\Rbf_g - \lambda \left(\sum_{g^\prime \ne g} \Rbf_{g^\prime}
\right),
\end{equation}
where $\lambda = 1/w$. Note that the first difference is that
$\Rbf_g$ and $\sum_{g^\prime \ne g} \Rbf_{g^\prime}$ are
respectively used in \cite{Chen&Lau:14JSAC} instead of
$\Rbf_{g,1}$ and $\Rbf_{g,2}$ used in the proposed algorithm.
Thus, in the proposed algorithm there is  slight change in the
signal power by including the impact of ZF inner beamforming and
the inclusion of the thermal noise in the leakage part. However,
the major difference of the proposed approach from the existing
method in \cite{Chen&Lau:14JSAC} is the formulation of the trace
ratio in Problem $\Pc_2$. It was shown by Zhang {\it et al.} that
the optimal solution of TQP $\Pc_2$ is the $M_g$ dominant
eigenvectors of
\begin{equation} \label{eq:discussionP2}
\Rbf_{g,1} - \rho(\Vbf_g) \Rbf_{g,2}
\end{equation}
 with $\rho(\Vbf_g)~(=
{\mathbb{E}}\{ \mbox{SLNR}_{g_k}\})$ achieving its maximum value
\cite{Zhang&Yang&Liao:14OL}. Thus, the proposed algorithm
optimizes the weighting factor $\rho(\Vbf_g)$  for different
channel statistics  by solving a {\em nonlinear} eigenvalue
problem\footnote{This procedure is clearly seen in  Algorithm
\ref{alg:OBDTQP}.} (note that the weighting factor itself is a
function of the design variable $\Vbf_g$), whereas the existing
method solves a linear eigenvalue problem by simply fixing the
weighting factor. The capability of optimizing the weighting
factor of the proposed outer beamformer design approach can yield
significant performance gain over the existing method, as seen in
Section \ref{sec:numericalresults}.

Now consider the difference between  Problem $\Pc_2$ and Problem
$\Pc_3$. The optimal solution to Problem $\Pc_3$ is given by the
set of $M_g$ dominant generalized eigenvectors of
$(\Rbf_{g,1},\Rbf_{g,2})$ satisfying
\cite{Sadek&Tarighat&Sayed:07WCOM}
\begin{equation}  \label{eq:discussSadek}
\Rbf_{g,1}\xibf_i = \lambda_i \Rbf_{g,2}\xibf_i, ~~i=1,\cdots,M_g,
\end{equation}
 where $\lambda_1 \ge \cdots \ge \lambda_{M_g}$. Since
$\Rbf_{g,1}$ is Hermitian and $\Rbf_{g,2}$ is positive-definite,
$\{\xibf_i\}$ are $\Rbf_{g,2}$-orthogonal \cite{Parlett:book},
i.e.,
\[
\xibf_j^H \Rbf_{g,2} \xibf_i = 0, ~~i \ne j.
\]
Thus, $\Vbf_g=[\xibf_1,\cdots,\xibf_{M_g}]$ does not satisfy the
orthonormality constraint $\Vbf_g^H \Vbf_g = \Ibf$ unless
$\Rbf_{g,2}= c\Ibf$ for some constant $c$. However, the matrix in
\eqref{eq:discussionP2} is Hermitian and thus, it is
diagonalizable by unitary similarity, i.e., it has orthonormal
eigenvectors \cite{Horn&Johnson:12Matrixbook}. Hence, it yields an
outer beamformer satisfying the orthonormality constraint. Note
that \eqref{eq:LauAlgo1} involves a linear Hermitian eigen-system
composed of a weighted matrix difference with a fixed weighting
factor and Problem $\Pc_3$ involves a
 linear generalized eigen-system \eqref{eq:discussSadek} with a matrix pencil,
whereas the proposed problem $\Pc_2$ involves a nonlinear
Hermitian eigen-system \eqref{eq:discussionP2} again composed of a
weighted matrix difference but with a weighting factor depending
on the design variable.

\section{Numerical results}
\label{sec:numericalresults}

In this section, we provide some numerical results to evaluate the
performance of the proposed outer beamformer design method in
Section \ref{sec:OuterBeamformerDesign}.  Throughout the
simulation, we considered a massive multiple-input single-output
(MISO) downlink system in which a base station is equipped with a
ULA of $M = 128$ antenna elements and each of $K$ users has a
single receive antenna. The $K$ users were grouped into four
groups $(G=4)$, and the base station supported five users
simultaneously for each group, i.e., $K_g = 5$ for each  $g = 1,
2, 3,$ and $4$. The channel covariance matrix for each group is
specified by \eqref{onering} with the center angle $\theta$ and
the AS parameter $\Delta$. The channel vector for each user was
independently generated according to the model \eqref{channel}.
The stopping tolerance of the proposed outer beamformer design
algorithm was set as $\epsilon = 10^{-4}$. The noise power is set
as $\sigma^2 = 1$. From here on, all dB power values are relative
to $\sigma^2 = 1$.

\begin{figure}[h]
    \centerline{
\scalefig{0.65} \epsfbox{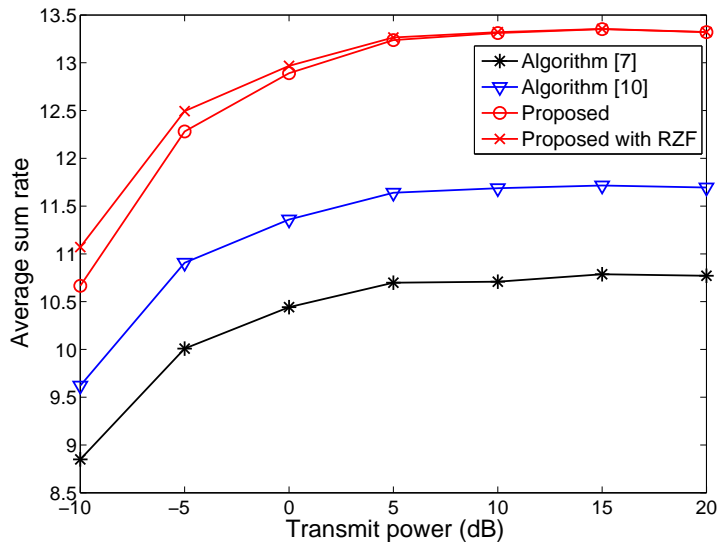} }
    \caption{Average sum rate versus transmit power}
    \label{fig:sumrate}
\end{figure}

Fig. \ref{fig:sumrate} shows the sum rate performance  of the
proposed outer beamformer design algorithm and two existing
algorithms in \cite{Adhikary&Nam&Ahn&Caire:13IT} and
\cite{Chen&Lau:14JSAC}. (Throughout the simulation, the weighting
factor for the algorithm in \cite{Chen&Lau:14JSAC} was set as $w =
1$.) We also considered  regularized ZF (RZF)
\cite{Peel&Hochwald&Swindlehurst:05TCOM} inner beamforming with
the same outer beamformer designed by the proposed algorithm under
the assumption of ZF inner beamforming. The regularization factor
was set as $\alpha = K/P_T$, which is approximately optimized
based on \cite{Peel&Hochwald&Swindlehurst:05TCOM,
Joham&Kusume&Gzara&Utschick:02SST}.  The four virtual sector
parameters were $\theta=-45^o, -15^o, 15,^o$ and $45^o$ with
$\Delta = \pi/13$ (i.e. $2\Delta=27.7^o$). The curves in the
figure is the average sum rate over  2000 independent channel
realizations according to the model \eqref{channel}.
 It is
seen that the proposed algorithm has significant gain over the
other two algorithms. It is also seen that in the low SNR region
the proposed outer beamformer combined with RZF inner beamforming
has performance gain over the proposed outer beamformer combined
with ZF inner beamforming as expected.

\begin{figure}[htbp]
\centerline{ \SetLabels
\L(0.25*-0.1) (a) \\
\L(0.75*-0.1) (b) \\
\endSetLabels
\leavevmode
\strut\AffixLabels{ \scalefig{0.5}\epsfbox{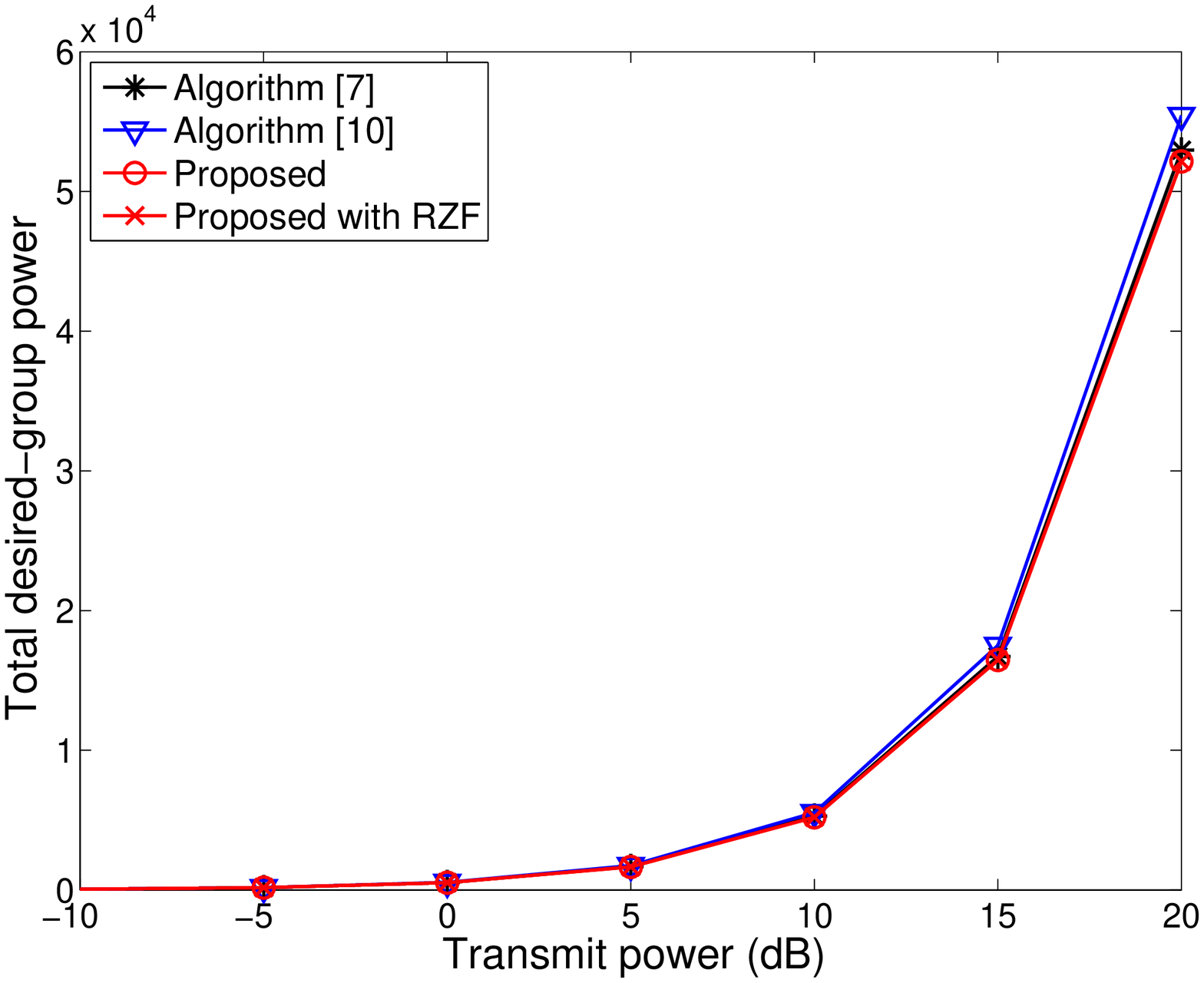}
\scalefig{0.5}\epsfbox{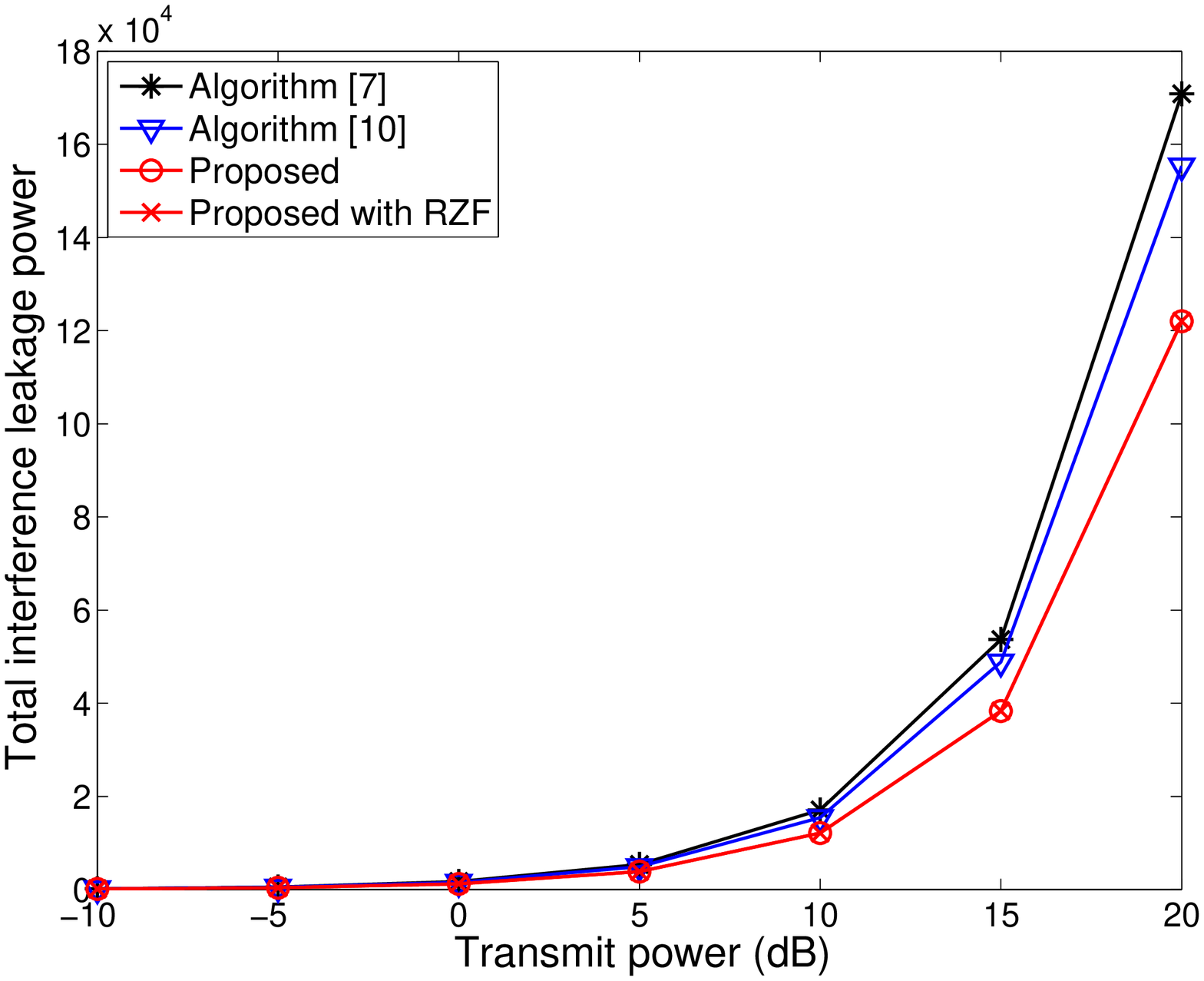} } }
\vspace{0.5cm} \caption{(a) total desired-group power versus
transmit power and (b) total interference leakage power versus
transmit power } \label{fig:sigalandleakage}
\end{figure}

Fig. \ref{fig:sigalandleakage} shows the total signal power  to
the desired group and the total leakage power to  undesired groups
under the same setting as in Fig. \ref{fig:sumrate}.  The shown
curves are average values over 2000 channel realizations. Now the
cause of the performance gain of the proposed algorithm over the
existing algorithms is clear. It achieves almost the same signal
power to the desired group with reduced leakage power to
undesired groups compared to the two other algorithms, by solving
a nonlinear eigenvalue problem with adaptive weighting resulting
from the trace quotient formulation for the signal power and the
leakage power.

\begin{figure}[tp]
    \centerline{
\scalefig{0.65} \epsfbox{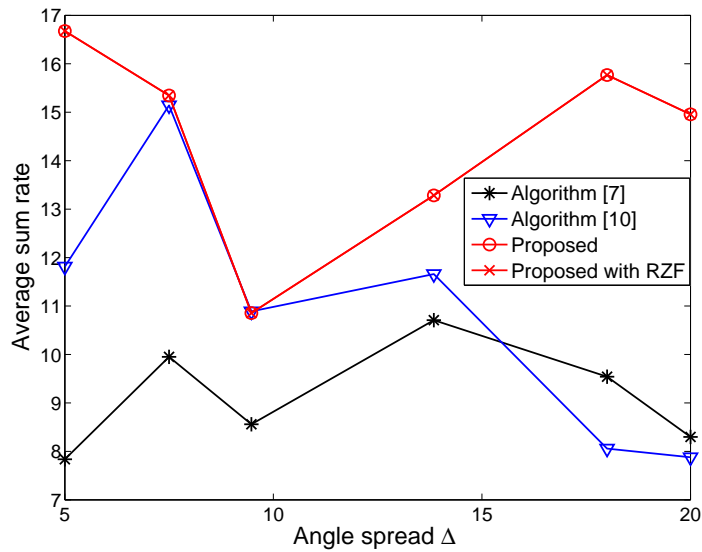} }
    \caption{Average sum rate over various algorithms versus angle spread (The transmit power is set as $P$ = 15dB.) }
    \label{fig:AS}
\end{figure}
 In the case of
the algorithm \cite{Chen&Lau:14JSAC}, the performance can change
with respect to  the weighting factor between the desired-group
signal power and the inter-group interference power. However, the
optimal weighting factor depending on the channel statistics is
not known {\em a priori} in the case of a constant weighting
factor, and the weight factor is not adaptively optimized
according to the given channel statistic information in
\cite{Chen&Lau:14JSAC}. Thus, we evaluated the sum rate
performance for different channel covariance matrix setup to see
the impact of a fixed weighting factor. The channel covariance
matrix is a function of the center angle and angular spread
according to the
 model \eqref{onering}.  We fixed the center angles of four virtual subsectors as  $\theta = -45^o, -15^o, 15,^o$ and $45^o$ but varied AS $\Delta$.
Fig. \ref{fig:AS} shows  the average sum rate performance with
respect to $\Delta$ when the transmit power is 15 dB.  It is seen
that at a certain AS, the fixed weighting factor $w=1$ yields good
performance but deteriorated performance for different AS's. Thus,
the optimization of the weighting factor depending on different
channel statistics is necessary for good performance for arbitrary
channel statistics.

\begin{figure}[tp]
    \centerline{
\scalefig{0.65} \epsfbox{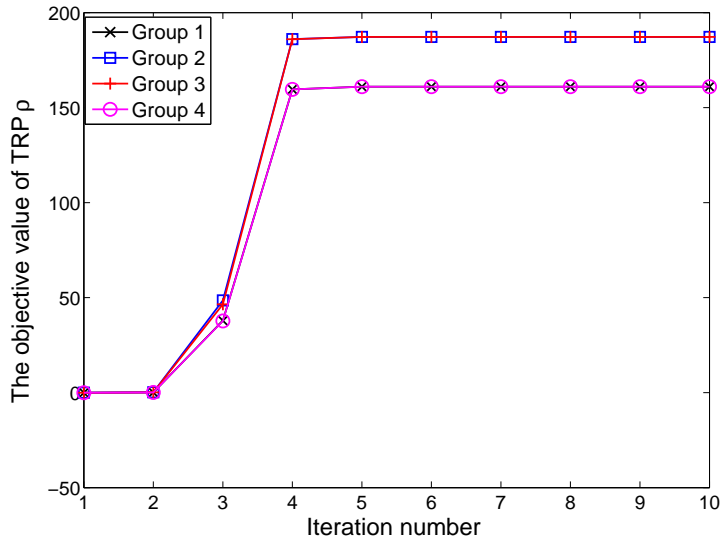} }
    \caption{The objective value of the TRP for each group versus iteration number }
    \label{fig:convergence}
\end{figure}

Finally, we tested the convergence of the proposed algorithm. Fig.
\ref{fig:convergence} shows that the objective value $\rho_n$ of
the TQP for each group by the proposed algorithm as a function of
iteration time. It is seen that the proposed algorithm
monotonically converges, as expected from Theorem
\ref{theorem:convergence}, and furthermore the proposed algorithm
converges after only a few iterations.

\section{Conclusion}
\label{sec:conclusion}

In this paper, we have considered the outer beamformer design
problem based  only on channel statistic information for two-stage
beamforming for massive MIMO downlink and have proposed an outer
beamformer design method which maximizes a lower bound on the
average SLNR. We have shown that the proposed SLNR-based outer
beamformer design problem reduces to a TQP, which is often
encountered in the field of machine learning, and have presented
an iterative algorithm to obtain an optimal solution to the
proposed TQP. Numerical results show that the proposed outer
beamformer design method yields significant performance gain over
existing methods. The proposed method can easily be adapted to the
multi-cell scenario in which each base station serves one group of
users.

\appendices
\section{Proof of Theorem \ref{theorem:lowerbound}}

Using similar techniques to those used in the proof of Lemma 1 in
\cite{Li&Paul&Narasimhan&Cioffi:06IT} and the proof of Theorem 1
in \cite{Liu&Lau:14SP}, we first prove several lemmas necessary
for proof of  Theorem \ref{theorem:lowerbound}.

\begin{lemma}  \label{lem:lemma1}
For ZF inner beamforming, we have
\begin{equation}
|{\hbf^H_{g_k} \Vbf_{g}}
\wbf_{g_k} |^2 = \hbf^H_{g_k} \Vbf_{g} \wt{\Ubf}_{g_k}
\wt{\Ubf}^H_{g_k} \Vbf_g^H \hbf_{g_k},
\end{equation}
 where  the column space of
$\wt{\Ubf}_{g_k}$ is the null space of the composite effective
channel matrix $\wt{\Hbf}_{g, -k}$ except the channel of user
$g_k$, i.e., $\wt{\Hbf}_{g, -k}  \defeq [ \Vbf_g^H \hbf_{g_1},
\cdots , \Vbf_g^H \hbf_{g_{k-1}},$ $ \Vbf_g^H \hbf_{g_{k+1}}
\cdots \Vbf_g^H \hbf_{g_{K_g}}]^H $.
\end{lemma}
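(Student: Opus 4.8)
The plan is to exploit the explicit pseudoinverse form of the ZF inner beamformer and to reduce the identity to a statement about orthogonal projection onto the null space of the interfering in-group effective channels. First I would introduce the effective channel of user $g_k$, namely $\wt{\hbf}_{g_k} \defeq \Vbf_g^H \hbf_{g_k} \in {\mathbb{C}}^{M_g}$, so the quantity to be rewritten is $|\wt{\hbf}_{g_k}^H \wbf_{g_k}|^2$. Recalling from the system model that $\wbf_{g_k}$ is the unit-norm $k$-th column of $\Wbf_g = \wt{\Hbf}_g^H(\wt{\Hbf}_g\wt{\Hbf}_g^H)^{-1}\Pbf_g$, where $\Pbf_g$ merely scales each column so that $\|\wbf_{g_k}\|=1$, I note that $\wbf_{g_k}$ is proportional to the minimum-norm solution of $\wt{\Hbf}_g\wbf = \mathbf{e}_k$.

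The core step is to recharacterize this solution. I would observe that $\wt{\Hbf}_g\wbf \propto \mathbf{e}_k$ is equivalent to demanding $\wt{\Hbf}_{g,-k}\wbf = \bzero$ together with $\wt{\hbf}_{g_k}^H\wbf \neq 0$; hence $\wbf_{g_k}$ lies in $\mathrm{null}(\wt{\Hbf}_{g,-k})$, and among all such vectors the normalized minimum-norm one is precisely the normalized projection of $\wt{\hbf}_{g_k}$ onto that null space. With $\wt{\Ubf}_{g_k}$ an orthonormal basis of $\mathrm{null}(\wt{\Hbf}_{g,-k})$ this yields the closed form
\[
\wbf_{g_k} = \frac{\wt{\Ubf}_{g_k}\wt{\Ubf}_{g_k}^H\wt{\hbf}_{g_k}}{\|\wt{\Ubf}_{g_k}^H\wt{\hbf}_{g_k}\|}.
\]
Substituting this into $\wt{\hbf}_{g_k}^H\wbf_{g_k}$ and using $\wt{\Ubf}_{g_k}^H\wt{\Ubf}_{g_k}=\Ibf$ collapses the expression to $\|\wt{\Ubf}_{g_k}^H\wt{\hbf}_{g_k}\|$, whose squared magnitude is exactly $\wt{\hbf}_{g_k}^H\wt{\Ubf}_{g_k}\wt{\Ubf}_{g_k}^H\wt{\hbf}_{g_k} = \hbf_{g_k}^H\Vbf_g\wt{\Ubf}_{g_k}\wt{\Ubf}_{g_k}^H\Vbf_g^H\hbf_{g_k}$, which is the claimed identity.

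The hard part will be rigorously justifying the recharacterization of $\wbf_{g_k}$ as a normalized null-space projection: I would need to verify that the minimum-norm column produced by the pseudoinverse, once the power factor in $\Pbf_g$ is stripped off by the unit-norm normalization, genuinely aligns with $\wt{\Ubf}_{g_k}\wt{\Ubf}_{g_k}^H\wt{\hbf}_{g_k}$ rather than merely lying in its span. This follows from the fact that any solution of $\wt{\Hbf}_{g,-k}\wbf=\bzero$ has no component in the row space of $\wt{\Hbf}_{g,-k}$, so the direction is fixed and the scale is pinned down by $\wt{\hbf}_{g_k}^H\wbf_{g_k}>0$. A minor regularity point I would also flag is that $\wt{\Hbf}_g\wt{\Hbf}_g^H$ must be invertible, which holds almost surely under the i.i.d.\ Gaussian model in \eqref{channel} whenever $K_g \le M_g$.
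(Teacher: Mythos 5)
Your proof is correct and follows essentially the same route as the paper: both rest on the closed-form expression of the unit-norm ZF beamformer as the normalized projection of the effective channel $\Vbf_g^H\hbf_{g_k}$ onto $\mathrm{null}(\wt{\Hbf}_{g,-k})$ (the paper's \eqref{eq:innerformula}), followed by the same substitution and idempotence simplification. The only difference is that the paper imports this projection formula by citation to \cite{Li&Paul&Narasimhan&Cioffi:06IT}, whereas you derive it yourself via the minimum-norm characterization of the pseudoinverse column, which is a sound (and self-contained) justification of the step the paper takes for granted.
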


\begin{proof} $\wt{\Hbf}_{g, -k} \in {\cal{C}}^{ (K_g -1) \times
M_g} ~(K_g \le M_g)$ is decomposed by  singular value
decomposition (SVD) as
\begin{equation} \label{eq:lem1hbfgmk}
\wt{\Hbf}_{g, -k} = \Upsilonbf_{g_k} \Dbf_{g_k} \left[
\begin{array} {c}
\Ubf^H_{g_k} \\
\wt{\Ubf}^H_{g_k}
\end{array}
\right ]
\end{equation}
where $\Upsilonbf_{g_k}$ is a $(K_g -1) \times (K_g -1)$ unitary
matrix, $\Dbf_{g_k}$ is a $M_g \times M_g$ diagonal matrix,
$\Ubf_{g_k}$ is a $M_g \times (K_g -1)$ submatrix whose column
space is the row space of $\wt{\Hbf}_{g, -k}$, and
$\wt{\Ubf}_{g_k}$ is a $M_g \times (M_g - K_g + 1)$ submatrix
whose column space is the null space of $\wt{\Hbf}_{g, -k}$. The
ZF inner beamformer $\wbf_{g_k}$  with equal power can be
expressed by the projection of the effective channel $\Vbf_g^H
\hbf_{g_k}$ of user $g_k$ to $\wt{\Ubf}_{g_k}$ as
\cite{Li&Paul&Narasimhan&Cioffi:06IT}
\begin{equation}  \label{eq:innerformula}
\wbf_{g_k}  = \frac{\wt{\Ubf}_{g_k} \wt{\Ubf}^H_{g_k} \Vbf_g^H
\hbf_{g_k}}{ ||\wt{\Ubf}_{g_k} \wt{\Ubf}^H_{g_k} \Vbf_g^H
\hbf_{g_k}||}.
\end{equation}
By  \eqref{eq:innerformula}, $|{\hbf^H_{g_k} \Vbf_{g}} \wbf_{g_k}
|^2$ is expressed
 as
\begin{eqnarray*}
|{\hbf^H_{g_k} \Vbf_{g}} \wbf_{g_k} |^2 &=& \hbf^H_{g_k} \Vbf_{g} \wbf_{g_k} \wbf_{g_k}^H \Vbf_{g}^H \hbf_{g_k} \\
&=&  \frac{\hbf^H_{g_k} \Vbf_{g} \wt{\Ubf}_{g_k} \wt{\Ubf}^H_{g_k}
\Vbf_g^H \hbf_{g_k} \hbf_{g_k}^H \Vbf_g \wt{\Ubf}_{g_k}
\wt{\Ubf}^H_{g_k} \Vbf_{g}^H \hbf_{g_k}}
          { ||\wt{\Ubf}_{g_k} \wt{\Ubf}^H_{g_k} \Vbf_g^H \hbf_{g_k}||^2 } \\
&=& \hbf^H_{g_k} \Vbf_{g} \wt{\Ubf}_{g_k} \wt{\Ubf}^H_{g_k}
\Vbf_g^H \hbf_{g_k}.
\end{eqnarray*}
This concludes the proof.
\end{proof}

\vspace{0.5cm}
\begin{lemma}  \label{lem:lemma2}
  The conditional expectation of $||{\hbf^H_{g_k} \Vbf_{g}} \wbf_{g_k} ||^2$ for given $\wt{\Hbf}_{g, -k}$ is given by
\begin{equation}
\Ebb \left\{ |{\hbf^H_{g_k} \Vbf_{g}} \wbf_{g_k} |^2 ~|~
\wt{\Hbf}_{g, -k} \right\} = \Tr( \Sigmabf_{g_k} )
\end{equation}
where
\begin{equation}  \label{eq:sigmabfgk}
\Sigmabf_{g_k} = \wt{\Ubf}_{g_k}^H \Vbf_g^H \Rbf_g \Vbf_g
\wt{\Ubf}_{g_k}
\end{equation}
 and $\wt{\Ubf}_{g_k}$ is defined in \eqref{eq:lem1hbfgmk} in Lemma
\ref{lem:lemma1}.
\end{lemma}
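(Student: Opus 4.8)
The plan is to condition on $\wt{\Hbf}_{g,-k}$ and thereby reduce the conditional expectation to a deterministic trace, exploiting the statistical independence between the desired user's channel and those of the remaining in-group users. First I would invoke Lemma~\ref{lem:lemma1} to replace $|\hbf^H_{g_k}\Vbf_g\wbf_{g_k}|^2$ by the quadratic form $\hbf^H_{g_k}\Vbf_g\wt{\Ubf}_{g_k}\wt{\Ubf}^H_{g_k}\Vbf_g^H\hbf_{g_k}$, so that the only channel-dependent pieces are the outer factors $\hbf_{g_k}$ and the projection matrix $\wt{\Ubf}_{g_k}\wt{\Ubf}^H_{g_k}$.

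The key observation is that $\wt{\Ubf}_{g_k}$ is obtained entirely from the SVD \eqref{eq:lem1hbfgmk} of $\wt{\Hbf}_{g,-k}$, and is therefore a deterministic function of $\wt{\Hbf}_{g,-k}$. Moreover, $\hbf_{g_k}$ is independent of $\wt{\Hbf}_{g,-k}$, since by \eqref{channel} the normalized channels $\ovl{\hbf}_{g_1},\cdots,\ovl{\hbf}_{g_{K_g}}$ are i.i.d. and $\wt{\Hbf}_{g,-k}$ involves only the users $j\ne k$. Hence, once we condition on $\wt{\Hbf}_{g,-k}$, the matrix $\Abf\defeq\Vbf_g\wt{\Ubf}_{g_k}\wt{\Ubf}^H_{g_k}\Vbf_g^H$ becomes deterministic and the expectation is taken over $\hbf_{g_k}$ alone.

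Substituting the channel model $\hbf_{g_k}=\Rbf_g^{1/2}\ovl{\hbf}_{g_k}$ with $\ovl{\hbf}_{g_k}\sim{\cal{CN}}(\bzero,\Ibf_M)$, the conditional expectation becomes $\Ebb\{\ovl{\hbf}^H_{g_k}\Rbf_g^{1/2}\Abf\Rbf_g^{1/2}\ovl{\hbf}_{g_k}\}$. I would then apply the standard white-Gaussian quadratic-form identity $\Ebb\{\ovl{\hbf}^H_{g_k}\Rbf_g^{1/2}\Abf\Rbf_g^{1/2}\ovl{\hbf}_{g_k}\}=\Tr(\Rbf_g^{1/2}\Abf\Rbf_g^{1/2})$, which holds because $\Rbf_g^{1/2}\Abf\Rbf_g^{1/2}$ is deterministic after conditioning and $\ovl{\hbf}_{g_k}$ is white. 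Finally, the cyclic property of the trace rearranges $\Tr(\Rbf_g^{1/2}\Vbf_g\wt{\Ubf}_{g_k}\wt{\Ubf}^H_{g_k}\Vbf_g^H\Rbf_g^{1/2})$ into $\Tr(\wt{\Ubf}^H_{g_k}\Vbf_g^H\Rbf_g\Vbf_g\wt{\Ubf}_{g_k})=\Tr(\Sigmabf_{g_k})$, which is exactly \eqref{eq:sigmabfgk}.

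The only genuinely delicate point — and the one I would state with care — is the independence argument: I must argue that conditioning on $\wt{\Hbf}_{g,-k}$ freezes the projector $\wt{\Ubf}_{g_k}\wt{\Ubf}^H_{g_k}$ while leaving $\hbf_{g_k}$ fully ${\cal{CN}}(\bzero,\Rbf_g)$-distributed. Once this is secured, the rest is the routine Gaussian trace identity together with cyclic invariance of the trace, and I anticipate no further obstacle.
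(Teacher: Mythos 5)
Your proposal is correct and takes essentially the same route as the paper: invoke Lemma~\ref{lem:lemma1}, observe that conditioning on $\wt{\Hbf}_{g,-k}$ freezes $\wt{\Ubf}_{g_k}$ while $\hbf_{g_k}$ remains ${\cal{CN}}(\bzero,\Rbf_g)$ by independence, and then evaluate the resulting Gaussian quadratic form. The only cosmetic difference is that the paper whitens $\xibf_{g_k}=\wt{\Ubf}^H_{g_k}\Vbf_g^H\hbf_{g_k}$ through an explicit eigendecomposition of $\Sigmabf_{g_k}$ before summing eigenvalues, whereas you apply the identity $\Ebb\{\xbf^H\Mbf\xbf\}=\Tr(\Mbf)$ for white $\xbf$ directly and finish with cyclic invariance of the trace.
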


\vspace{0.5em} \begin{proof} By Lemma 1, we have $|{\hbf^H_{g_k}
\Vbf_{g}} \wbf_{g_k} |^2 = \hbf^H_{g_k} \Vbf_{g} \wt{\Ubf}_{g_k}
\wt{\Ubf}^H_{g_k} \Vbf_g^H \hbf_{g_k}$. For given $\wt{\Hbf}_{g,
-k}$, $\wt{\Ubf}_{g_k}$ is also given because $\wt{\Ubf}_{g_k}$
deterministically depends on  $\wt{\Hbf}_{g, -k}$ according to \eqref{eq:lem1hbfgmk}. Define
$\xibf_{g_k} \triangleq \wt{\Ubf}^H_{g_k} \Vbf_g^H \hbf_{g_k}$.
 Then,  $\xibf_{g_k}|\wt{\Hbf}_{g, -k}$ is
 a zero-mean
complex Gaussian random vector with the covariance matrix
$\Sigmabf_{g_k} = \wt{\Ubf}^H_{g_k} \Vbf_g^H \Rbf_g \Vbf_g
\wt{\Ubf}_{g_k}$, i.e., $\xibf_{g_k}|\wt{\Hbf}_{g, -k} \sim
{\cal{CN}}(0, \Sigmabf_{g_k})$, since
$\hbf_{g_k} \sim {\cal{CN}}(0, \Rbf_g)$ from \eqref{channel}. Let the eigenvalue decomposition of
$\Sigmabf_{g_k}$ be
\[
\Sigmabf_{g_k} = \ovl{\Ubf}_{g_k}
\ovl{\Lambdabf}_{g_k} \ovl{\Ubf}_{g_k}^H,
\]
where  $\ovl{\Lambdabf}_{g_k}=\mbox{diag}(\bar{\lambda}_{g_k,1},\cdots,\bar{\lambda}_{g_k,N_g})$ and $N_g = M_g - K_g + 1$.
 Then, $\xibf_{g_k}$ can be expressed
as
\[
\xibf_{g_k}|\wt{\Hbf}_{g, -k} =  \Sigmabf_{g_k}^{1/2} \etabf = \ovl{\Ubf}_{g_k}
\ovl{\Lambdabf}_{g_k}^{1/2}\etabf,
\]
where $\etabf \sim
{\cal{CN}}(0, \Ibf_{M_g -K_g +1})$. Thus, $|{\hbf^H_{g_k}
\Vbf_{g}} \wbf_{g_k} |^2$ can be rewritten as
\begin{align}
|{\hbf^H_{g_k} \Vbf_{g}} \wbf_{g_k} |^2 &= \hbf^H_{g_k} \Vbf_{g} \wt{\Ubf}_{g_k}
\wt{\Ubf}^H_{g_k} \Vbf_g^H \hbf_{g_k} \nonumber\\
&= \xibf_{g_k}^H \xibf_{g_k} = \mbox{Tr}(\xibf_{g_k} \xibf_{g_k}^H )\nonumber\\
&=   \mbox{Tr}( \ovl{\Ubf}_{g_k}
\ovl{\Lambdabf}_{g_k}^{1/2}\etabf
\etabf^H \ovl{\Lambdabf}_{g_k}^{H/2} \ovl{\Ubf}_{g_k}^H). \label{signalpower2}
\end{align}
  Using \eqref{signalpower2}, we now take
the conditional expectation of $|{\hbf^H_{g_k} \Vbf_{g}} \wbf_{g_k} |^2$  for given $\wt{\Hbf}_{g, -k}$:
\begin{align}
\Ebb \left\{ |{\hbf^H_{g_k} \Vbf_{g}} \wbf_{g_k} |^2 ~|~ \wt{\Hbf}_{g, -k} \right\}
&= \Ebb \left\{ \mbox{Tr}( \ovl{\Ubf}_{g_k}
\ovl{\Lambdabf}_{g_k}^{1/2}\etabf
\etabf^H \ovl{\Lambdabf}_{g_k}^{H/2} \ovl{\Ubf}_{g_k}^H) ~|~ \wt{\Hbf}_{g, -k} \right\} \nonumber \\
&=   \mbox{Tr} \left( \ovl{\Ubf}_{g_k}
\ovl{\Lambdabf}_{g_k}^{1/2} \Ebb \left\{ \etabf
\etabf^H  ~|~ \wt{\Hbf}_{g, -k} \right\} \ovl{\Lambdabf}_{g_k}^{H/2} \ovl{\Ubf}_{g_k}^H \right) \nonumber \\
&=   \mbox{Tr} \left( \ovl{\Ubf}_{g_k}
\ovl{\Lambdabf}_{g_k}^{1/2} \ovl{\Lambdabf}_{g_k}^{H/2} \ovl{\Ubf}_{g_k}^H \right) \nonumber \\
&=\sum_{i = 1}^{N_g} \ovl{\lambda}_{g_k,i}      = \Tr(\Sigmabf_{g_k}).
\end{align}
\end{proof}

 \vspace{0.5cm}

\begin{lemma}\cite{Horn&Johnson:12Matrixbook}[Corollary 4.3.18]  \label{lem:lemma3}
 Let $\Mbf$ be any $J \times J$ Hermitian matrix and let $j$ be a given integer such that $ 1\le j \le J$. Then, we have
\begin{equation}
\lambda_1  + \lambda_2 + \cdots + \lambda_{j} = \min_{\Ubf^H \Ubf = \Ibf_{j}} \Tr(\Ubf^H \Mbf \Ubf),
\end{equation}
and
\begin{equation}
\lambda_{J-j+1} + \lambda_{J-j+2} + \cdots + \lambda_{J} =
\max_{\Ubf^H \Ubf = \Ibf_{j}} \Tr(\Ubf^H \Mbf \Ubf),
\end{equation}
 where $\lambda_k$ is the $k$-th smallest eigenvalue of $\Mbf$
and equality holds  if the columns of the $J \times j$ matrix
$\Ubf$ are chosen to be the orthonormal eigenvectors associated
with the corresponding eigenvalues of $\Mbf$.
\end{lemma}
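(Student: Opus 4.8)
The plan is to recognize this as Ky Fan's extremal principle for sums of eigenvalues of a Hermitian matrix and to prove it by reducing the trace to a linear functional of eigenvalue ``occupation weights'' constrained to a simple polytope. It suffices to prove the maximization identity, since the minimization identity follows at once by applying the maximization result to $-\Mbf$ (whose $k$-th smallest eigenvalue is $-\lambda_{J-k+1}$) and negating. I would also dispose of the achievability direction first: if the columns of $\Ubf$ are taken to be orthonormal eigenvectors of $\Mbf$ associated with the $j$ largest eigenvalues $\lambda_{J-j+1},\ldots,\lambda_J$, then $\Ubf^H\Ubf=\Ibf_j$ and $\Tr(\Ubf^H\Mbf\Ubf)=\sum_{k=J-j+1}^J \lambda_k$. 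Hence the maximum is at least this sum, and the real work is to show it cannot exceed it.

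First I would write the spectral decomposition $\Mbf=\mathbf{Q}\Lambdabf\mathbf{Q}^H$ with $\mathbf{Q}=[\mathbf{q}_1,\ldots,\mathbf{q}_J]$ unitary and $\Lambdabf=\diag(\lambda_1,\ldots,\lambda_J)$, $\lambda_1\le\cdots\le\lambda_J$. For any feasible $\Ubf=[\mathbf{u}_1,\ldots,\mathbf{u}_j]$ with $\Ubf^H\Ubf=\Ibf_j$, expanding each quadratic form in the eigenbasis gives $\Tr(\Ubf^H\Mbf\Ubf)=\sum_{i=1}^j \mathbf{u}_i^H\Mbf\mathbf{u}_i=\sum_{k=1}^J \lambda_k w_k$, where $w_k\defeq\sum_{i=1}^j |\mathbf{q}_k^H\mathbf{u}_i|^2$. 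The key structural step is then to establish the two constraints that turn this into a linear program: $0\le w_k\le 1$ and $\sum_{k=1}^J w_k=j$. The lower bound is trivial; the sum constraint follows by exchanging the order of summation and using $\sum_{k=1}^J|\mathbf{q}_k^H\mathbf{u}_i|^2=\|\mathbf{u}_i\|^2=1$ for each $i$, so that $\sum_k w_k=\sum_{i=1}^j 1=j$; and the upper bound $w_k\le 1$ is Bessel's inequality applied to the orthonormal (though generally incomplete) set $\{\mathbf{u}_1,\ldots,\mathbf{u}_j\}$, i.e. $\sum_{i=1}^j|\mathbf{q}_k^H\mathbf{u}_i|^2\le\|\mathbf{q}_k\|^2=1$.

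With this reduction in hand, I would finish by solving $\max\sum_k\lambda_k w_k$ over the polytope $\{\,0\le w_k\le 1,\ \sum_k w_k=j\,\}$. Since the objective is linear its maximum is attained at a vertex, and the vertices of this hypersimplex are exactly the $0/1$ vectors with $j$ ones; because the $\lambda_k$ are increasing, the maximizing vertex places its ones on the $j$ largest eigenvalues, giving $\Tr(\Ubf^H\Mbf\Ubf)\le\sum_{k=J-j+1}^J\lambda_k$. Combined with achievability this proves the maximization identity, and the $-\Mbf$ substitution yields the minimization identity. I expect the main obstacle to be the clean justification of $w_k\le 1$ and of the claim that the program is maximized at the ``top-$j$'' vertex, which is really a majorization statement; if a vertex argument feels heavy, I would instead give the elementary exchange argument showing that any weight placed on a smaller eigenvalue can be shifted to a larger, not-yet-saturated one without decreasing the objective.
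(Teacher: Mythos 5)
Your proposal is correct, but there is nothing in the paper to compare it against: the paper does not prove this lemma at all --- it is quoted verbatim from Horn and Johnson (Corollary 4.3.18, the Ky Fan extremal principle) and used as a black box in the proof of Lemma 4. So where the paper merely cites, you supply a complete and standard proof, and every step checks out. The achievability direction is handled correctly by taking $\Ubf$ to be the eigenvectors of the top $j$ eigenvalues; the reduction $\Tr(\Ubf^H\Mbf\Ubf)=\sum_{k=1}^J\lambda_k w_k$ with $w_k=\sum_{i=1}^j|\mathbf{q}_k^H\mathbf{u}_i|^2$ is valid, and your three constraints are each justified by exactly the right tool: $w_k\ge 0$ trivially, $\sum_k w_k=j$ from completeness of the unitary eigenbasis (Parseval for each unit column $\mathbf{u}_i$), and $w_k\le 1$ from Bessel's inequality for the incomplete orthonormal family $\{\mathbf{u}_i\}$. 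Note that you only need the inclusion of the achievable $w$'s in the hypersimplex $\{0\le w_k\le 1,\ \sum_k w_k=j\}$, not equality (which would be a Schur--Horn statement), since achievability is established separately --- your write-up gets this division of labor right. The vertex argument is sound (the vertices of that polytope are precisely the $0/1$ vectors with $j$ ones), and your fallback exchange argument --- shifting weight from a smaller eigenvalue to a larger, unsaturated one --- is the more elementary route and avoids invoking any polytope machinery. Finally, the deduction of the minimization identity by applying the maximization result to $-\Mbf$, using that the $k$-th smallest eigenvalue of $-\Mbf$ is $-\lambda_{J-k+1}$, is exactly right. In short: a correct, self-contained proof of a result the paper leaves to the literature.
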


\vspace{0.5cm}
\begin{lemma}
For the matrix $\Sigmabf_{g_k}$ defined in Lemma \ref{lem:lemma2},  $\Tr(\Sigmabf_{g_k})$ is lower bounded as
\begin{equation}
\Tr (\Sigmabf_{g_k})  \ge  \Tr(\Vbf_g^H \Rbf_g \Vbf_g) - (K_g - 1)\lambda_g,
\end{equation}
where $\Vbf_g$ is the outer beamformer for group $g$ satisfying $\Vbf_g^H\Vbf_g=\Ibf$, $\Rbf_g$ is the channel covariance matrix of group $g$, and  $\lambda_g$ is the largest eigenvalue of $\Rbf_g$.
\end{lemma}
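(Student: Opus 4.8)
The plan is to recognize $\Tr(\Sigmabf_{g_k})$ as the trace of a compression of the Hermitian matrix $\Mbf \defeq \Vbf_g^H \Rbf_g \Vbf_g$ onto the subspace spanned by the orthonormal columns of $\wt{\Ubf}_{g_k}$, and then invoke the eigenvalue trace bound of Lemma \ref{lem:lemma3}. First I would recall from Lemma \ref{lem:lemma2} that $\Sigmabf_{g_k} = \wt{\Ubf}_{g_k}^H \Mbf \wt{\Ubf}_{g_k}$, where $\Mbf$ is $M_g \times M_g$ Hermitian and $\wt{\Ubf}_{g_k}$ has $N_g = M_g - K_g + 1$ orthonormal columns, i.e. $\wt{\Ubf}_{g_k}^H \wt{\Ubf}_{g_k} = \Ibf_{N_g}$. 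Applying the first identity of Lemma \ref{lem:lemma3} with $J = M_g$ and $j = N_g$ then gives
\begin{equation}
\Tr(\Sigmabf_{g_k}) = \Tr(\wt{\Ubf}_{g_k}^H \Mbf \wt{\Ubf}_{g_k}) \ge \sum_{i=1}^{N_g} \mu_i,
\end{equation}
where $\mu_1 \le \cdots \le \mu_{M_g}$ denote the eigenvalues of $\Mbf$ in increasing order, since $\wt{\Ubf}_{g_k}$ is one admissible choice of semi-unitary matrix and the minimum over all such matrices equals the sum of the $N_g$ smallest eigenvalues.

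Next I would rewrite this sum of the $N_g$ smallest eigenvalues as the full trace minus the tail of largest eigenvalues. Because $\sum_{i=1}^{M_g} \mu_i = \Tr(\Mbf) = \Tr(\Vbf_g^H \Rbf_g \Vbf_g)$, we get
\begin{equation}
\sum_{i=1}^{N_g} \mu_i = \Tr(\Vbf_g^H \Rbf_g \Vbf_g) - \sum_{i=N_g+1}^{M_g} \mu_i,
\end{equation}
and the tail sum contains exactly $M_g - N_g = K_g - 1$ of the largest eigenvalues of $\Mbf$. Bounding each of them by $\mu_{M_g} = \lambda_{\max}(\Mbf)$ yields $\sum_{i=N_g+1}^{M_g} \mu_i \le (K_g-1)\,\lambda_{\max}(\Mbf)$.

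Finally I would establish the substantive inequality $\lambda_{\max}(\Mbf) \le \lambda_g$. Since $\Vbf_g^H \Vbf_g = \Ibf$, the map $\xbf \mapsto \Vbf_g \xbf$ preserves the Euclidean norm, so by the Rayleigh-quotient characterization of the largest eigenvalue,
\begin{equation}
\lambda_{\max}(\Mbf) = \max_{\|\xbf\|=1} \xbf^H \Vbf_g^H \Rbf_g \Vbf_g \xbf = \max_{\|\xbf\|=1} (\Vbf_g\xbf)^H \Rbf_g (\Vbf_g\xbf) \le \max_{\|\ybf\|=1} \ybf^H \Rbf_g \ybf = \lambda_g.
\end{equation}
Chaining the three displays gives $\Tr(\Sigmabf_{g_k}) \ge \Tr(\Vbf_g^H \Rbf_g \Vbf_g) - (K_g-1)\lambda_g$, as claimed. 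The only genuinely delicate point is the bookkeeping with the smallest-first eigenvalue convention in Lemma \ref{lem:lemma3} and matching the counts $N_g$ and $K_g-1$; once those align, the norm-preserving property of $\Vbf_g$ that delivers $\lambda_{\max}(\Vbf_g^H \Rbf_g \Vbf_g) \le \lambda_g$ is the heart of the bound, while the rest is routine.
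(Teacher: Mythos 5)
Your proof is correct and follows essentially the same route as the paper's: both invoke Lemma \ref{lem:lemma3} to lower-bound $\Tr(\wt{\Ubf}_{g_k}^H \Mbf \wt{\Ubf}_{g_k})$ by the sum of the $N_g$ smallest eigenvalues, drop the $K_g-1$ largest eigenvalues bounded via $\lambda_{\max}(\Vbf_g^H\Rbf_g\Vbf_g)\le\lambda_g$ from the Rayleigh quotient and $\Vbf_g^H\Vbf_g=\Ibf$. The only cosmetic difference is that the paper first normalizes by $\lambda_g$ (working with $\mathbf{\Pi}_g = \frac{1}{\lambda_g}\Vbf_g^H\Rbf_g\Vbf_g$ so the top eigenvalue is at most one) whereas you carry $\lambda_{\max}(\Mbf)$ through directly, which is an equivalent bookkeeping choice.
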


\vspace{0.5em}
\begin{proof}
  Applying the Rayleigh quotient technique, one can easily show that the maximum eigenvalue of the matrix $\Vbf_g^H \Rbf_g \Vbf_g$ is less than or equal to $\lambda_g$, where $\lambda_g$ is the largest eigenvalue of $\Rbf_g$, since $\Vbf_g$ has orthonormal columns. That is,
  \[
  \lambda_{max}(\Vbf_g^H \Rbf_g \Vbf_g) \le \max \frac{\xbf^H\Vbf_g^H \Rbf_g \Vbf_g\xbf^H}{\xbf^H\xbf} = \max \frac{\xbf^H\Vbf_g^H \Rbf_g \Vbf_g\xbf^H}{\xbf^H\Vbf_g^H\Vbf_g\xbf} = \max \frac{\tilde{\xbf}^H \Rbf_g \tilde{\xbf}^H}{\tilde{\xbf}^H\tilde{\xbf}} = \lambda_g.
  \]
       Define ${\mathbf{\Pi}}_g \triangleq \frac{1}{\lambda_g} \Vbf_g^H \Rbf_g \Vbf_g$ and let $ 0 \le \chi_1 \le \chi_2 \le \cdots,\chi_{M_g}$ be the eigenvalues of ${\mathbf{\Pi}}_g$. Then, the maximum eigenvalue $\chi_{M_g}$ of  ${\mathbf{\Pi}}_g$ is less than or equal to one from the above.
     Note from \eqref{eq:sigmabfgk} that $\frac{1}{\lambda_g} \Sigmabf_{g_k} = \wt{\Ubf}_{g_k}^H {\mathbf{\Pi}}_g
\wt{\Ubf}_{g_k}$, where $\wt{\Ubf}_{g_k}^H \wt{\Ubf}_{g_k}=\Ibf$.
  Applying  Lemma \ref{lem:lemma3} to $\frac{1}{\lambda_g} \Sigmabf_{g_k}$ with $\Mbf = {\mathbf{\Pi}}_g$, we have
\begin{align}
\frac{1}{\lambda_g} \Tr (\Sigmabf_{g_k}) &\ge \sum_{k = 1}^{N_g} \chi_k  \\
&\stackrel{(a)}{\ge} \sum_{k = 1}^{M_g} \chi_k - (K_g -1)\chi_{M_g} \\
&\stackrel{(b)}{\ge} \sum_{k = 1}^{M_g} \chi_k - (K_g - 1)  \\
&= \Tr( {\mathbf{\Pi}}_g ) - (K_g - 1)  \label{bound}
\end{align}
where  (a) follows from the relationship $N_g = M_g - K_g + 1$ and the fact that $\chi_{M_g}$ is the maximum eigenvalue of ${\mathbf{\Pi}}_g$ and  (b) follows from the fact that $\chi_{M_g} \le 1$. Multiplying both sides of \eqref{bound} by $\lambda_g$, we obtain the desired result.
\end{proof}

\vspace{0.5em}

\noindent {\it Proof of Theorem \ref{theorem:lowerbound}:}

Finally, we  prove Theorem \ref{theorem:lowerbound} by using the above lemmas. By Lemma 2 and Lemma 4, the conditional expectation of $|{\hbf^H_{g_k} \Vbf_{g}} \wbf_{g_k} |^2$ for given $\wt{\Hbf}_{g, -k}$ is lower-bounded as
\begin{eqnarray}
\Ebb \left\{ |{\hbf^H_{g_k} \Vbf_{g}} \wbf_{g_k} |^2 ~|~ \wt{\Hbf}_{g, -k} \right\} &=& \Tr( \Sigmabf_{g_k} ) \label{eq:pth1lowerbound1}\\
&\ge& \Tr(\Vbf_g^H \Rbf_g \Vbf_g) - (K_g - 1)\lambda_g. \label{eq:pth1lowerbound2},
\end{eqnarray}
where \eqref{eq:pth1lowerbound1} is by Lemma 2 and
\eqref{eq:pth1lowerbound2} is by Lemma 4. Note that the lower
bound \eqref{eq:pth1lowerbound2} is independent of $\wt{\Hbf}_{g,
-k}$ and is a constant. By taking expectation over $\wt{\Hbf}_{g,
-k}$ on both sides of \eqref{eq:pth1lowerbound2}. The left-hand
side (LHS) becomes $\Ebb \left\{ |{\hbf^H_{g_k} \Vbf_{g}}
\wbf_{g_k} |^2 \right\}$ by the law of iterated expectation and
the RHS does not change since the RHS is a constant. Hence, we
have
\begin{eqnarray}
\Ebb \left\{ |{\hbf^H_{g_k} \Vbf_{g}} \wbf_{g_k} |^2 \right\}
&\ge& \Tr(\Vbf_g^H \Rbf_g \Vbf_g) - (K_g - 1)\lambda_g
\end{eqnarray}
\hfill{$\blacksquare$}


\section{Proof of Theorem \ref{theorem:convergence}}

\tcr{}

As in the proof of Lemma 1 in
\cite{Wang&Yan&Xu&Tang&huang:07CVPR}, we first define  $f_n(
\Vbf_g) \defeq \Tr({\Vbf_g}^H ( \Rbf_{g,1} - \rho_n \Rbf_{g,2})
\Vbf_g)$. Then, we have $f_n( \Vbf_g^{(n-1)}) = 0$ from
\eqref{object}.
 The step \eqref{eq:Algo1update} of Algorithm 1 maximizes
$f_n(\Vbf_g)$ over the set $\{\Vbf_g: \Vbf_g^H \Vbf_g=\Ibf\}$
which includes $\Vbf_g^{(n-1)}$, and $\Vbf_g^{(n)}$ is the
maximizer of $f_n(\Vbf_g)$. Hence, we have
\begin{equation}
 f_n( \Vbf_g^{(n)}) \ge  f_n( \Vbf_g^{(n-1)}) = 0.
\end{equation}
Based on  the positive-definiteness of $\Rbf_{g,2}$, $f_n(
\Vbf_g^{(n)}) = \Tr({\Vbf_g^{(n)}}^H ( \Rbf_{g,1} - \rho_n
\Rbf_{g,2}) \Vbf_g^{(n)}) \ge 0$ can be rewritten as
\begin{equation}
\rho_{n+1}=\frac{\Tr({\Vbf_g^{(n)}}^H \Rbf_{g,1}
\Vbf_g^{(n)})}{\Tr({\Vbf_g^{(n)}}^H \Rbf_{g,2} \Vbf_g^{(n)})} \ge
\rho_n
\end{equation}
This concludes the proof. \hfill{$\square$}


\end{document}